
\documentclass[]{interact}

\usepackage{epstopdf}
\usepackage[caption=false]{subfig}

\usepackage{tikz}
\usetikzlibrary{decorations.pathreplacing,shapes}
\tikzset{pblock/.style = {draw,rectangle split, rectangle split horizontal,
                      rectangle split parts=2, align=center}}

\usepackage[numbers,sort&compress]{natbib}
\bibpunct[, ]{[}{]}{,}{n}{,}{,}

\theoremstyle{plain}
\newtheorem{theorem}{Theorem}[section]
\newtheorem{lemma}[theorem]{Lemma}

\newtheorem{proposition}[theorem]{Proposition}

\theoremstyle{definition}
\newtheorem{definition}[theorem]{Definition}

\theoremstyle{remark}

\newtheorem{notation}{Notation}

\newcommand{\dist}{\ensuremath\mathrm{dist}}
\newcommand{\Zplus}{\ensuremath{\mathbb Z}_{\geqslant0}}
\newcommand{\overnodes}[1]{\ensuremath L_V(#1)}
\newcommand{\overedges}[1]{\ensuremath L_E(#1)}


\newcommand{\indentitem}[1]{{\setlength\itemindent{1.5em} \item #1}}

\newenvironment{procedure}[2]{
\begin{center}
\begin{tabular}{|p{0.95\textwidth}|}
\hline
\multicolumn{1}{|c|}{\bfseries #1}\\
\hline
#2

\small
\begin{enumerate}
}{\end{enumerate}\\
\hline
\end{tabular}
\end{center}
}

\begin{document}


\title{Two-machine routing open shop on a tree: instance reduction and efficiently solvable subclass}

\author{
\name{I.~D. Chernykh\textsuperscript{a,b,c}\thanks{CONTACT I.~D. Chernykh. Email: idchern@math.nsc.ru} and E.~V. Lgotina\textsuperscript{b}}
\affil{\textsuperscript{a}Sobolev Institute of Mathematics, Novosibirsk, Russia; \textsuperscript{b}Novosibirsk State University, Novosibirsk, Russia; \textsuperscript{c}Novosibirsk Technical State University, Novosibirsk, Russia}
}

\maketitle

\begin{abstract}

We consider two-machine routing open shop problem on a tree. In this problem a transportation network with a tree-like structure is given, and each node contains some jobs to be processed by two mobile machines. Machines are initially located in the predefined node called \emph{the depot}, have to traverse the network to perform their operations on each job (and each job has to be processed by both machines in arbitrary order), and machines have to return to the depot after performing all the operations. The goal is to construct a feasible schedule for machines to process all the jobs and to return to the depot in shortest time possible. This problem is known to be NP-hard even in the case when the transportation network consists of just two nodes.

We propose an instance reduction procedure which allows to transform any instance of the problem to a simplified instance on a chain with limited number of jobs. The reduction considered preserves the standard lower bound on the optimum. We describe four possible outcomes of this procedure and prove that in three of them the initial instance can be solved to the optimum in linear time, thus introducing a wide polynomially solvable subclass of the problem considered. Our research can be used as a foundation to construct efficient approximation algorithms for the two-machine routing open shop on a tree.
\end{abstract}

\begin{keywords}
Scheduling; open shop with delays; routing open shop; standard lower bound; instance reduction; polynomially solvable subclass; overloaded node; overloaded edge
\end{keywords}

\section{Introduction}
We consider the routing open shop problem, which is a natural combination of a well-known metric traveling salesman problem and a classical scheduling open shop problem. Metric TSP hardly needs an introduction. The open shop problem, introduced by Gonzalez and Sahni \cite{GoSa}, can be described as follows. Sets \({\cal M}=\{M_1,\ldots,M_m\}\) of machines and \({\cal J}=\{J_1,\ldots,J_n\}\) of jobs are given and each machine \(M_i\) has to perform an operation on each job \(J_j\), this operation \(O_{ji}\) requires \(p_{ji}\geqslant0\) time units to complete. Each machine has to process jobs in some sequence which is not given in advance but has to be chosen by a scheduler. Operations of the same job cannot be processed simultaneously. The goal is to construct a schedule of processing of all jobs, {\it i.e.} to specify non-negative starting and completion times for each operation, such that the conditions above are satisfied and the maximum completion time (also referred to as the \emph{makespan}) is minimized.
Following the standard three-field notation for scheduling problems (see \cite{TheBible} for example) the open shop problem with \(m\) machines is denoted by \(Om||C_{\max}\). Notation \(O||C_{\max}\) is used when the number of machines is not bounded by any constant. The \(Om||C_{\max}\) problem is known (\cite{GoSa}) to be polynomially solvable in the case of two machines and is NP-hard for \(m\geqslant 3\). The  \(O||C_{\max}\) problem 
 is strongly NP-hard. Moreover, unless \(P=NP\), no \(\rho\)-approximation algorithm for \(O||C_{\max}\) exists \(\rho<\frac54\) \cite{SSS}.

Several algorithms for solving the two-machine problem $O2||C_{\max}$ to the optimum were proposed over the last decades, for example by Gonzalez and Sahni \cite{GoSa}, Pinedo and Schrage \cite{PinSch82} and de Werra \cite{dW89}. All the known algorithms run in linear time and produce optimal schedules with different structures. An important property of the schedules produced by each of those algorithms (and therefore of the optimal schedule) is its so-called \emph{normality}: the makespan of those schedules always coincides with the \emph{standard lower bound} \(\bar C\doteq\max\left\{\max\limits_i\sum\limits_{j=1}^np_{ji},\max\limits_j\sum\limits_{i=1}^mp_{ji}\right\}\).

\medskip

Most of the classical scheduling models (open shop included) share the following disadvantage. It is supposed that each machine is able to start a new operation at the same moment when it completes the previous one. In a real life environment that's not always possible. Usually jobs represent some material objects, therefore some delays between processing operations of two subsequent jobs may be unavoidable. Such delays can be machine-dependent, job- or sequence-dependent, and taking them into account can make the problem harder to investigate. Still, there is a number of papers considering scheduling problems with \emph{transportation delays} (see \cite{BrKn,LuSoStru,Stru,Kl93} for example). However, the problem we are considering in this paper uses a different approach to model transportation delays.

We consider the \emph{routing open shop} problem \cite{AvBeCh1} which can be described as the open shop meeting the metric traveling salesman problem (TSP). Let the input of the TSP be given by an edge-weighted graph \(G\). Jobs from \({\cal J}\) are distributed between the nodes of \(G\), each node contains at least one job. Machines are mobile and are initially located at the predefined node referred to as \emph{the depot}. Machines have to travel over the edges of \(G\), weights of the edges represent travel times for each machine. 
Any number of machines can travel over the same edge at the same time. Each machine has to visit each node of \(G\) (not necessary once), perform all the respective operations (under the feasibility constraints from the open shop problem), and return back to the depot after processing all the jobs. One has to construct a schedule, specifying starting and completion times for each machine's \emph{activity}, which is either a performing of some operation or traveling over some edge of $G$. Note that a machine has to arrive at some node to be able to process a job located at that node.

The makespan \(R_{\max}\) of a schedule $S$ for the routing open shop is the maximum completion time of machine's activity. (Note that $R_{\max}\geqslant C_{\max}$ and those to values are different in case the last activity is a traveling of machine to the depot.) The goal is to construct a feasible schedule minimizing the makespan.
The routing open shop problem with \(m\) machines is denoted by \(ROm||R_{\max}\), or \(ROm|G=X|R_{\max}\) if we want to specify the structure \(X\) of the graph \(G\). In the latter case we use either standard notation from graph theory, such as $K_p$ for the \emph{complete graph} with $p$ nodes, or standard terms like $tree$ or $chain$.

The $ROm||R_{\max}$ problem has a certain similarity to the so-called {\em open shop with batch setup times} (see \cite{Kl93} for example). In the latter problem jobs a partitioned into several groups referred to as \emph{batches}, and a machine has to spend a pre-defined {\em setup time} when switching from one batch to another. Batches can correspond to sets of jobs from the same node in the routing open shop, and setup times correspond to travel times. However, there are two significant differences. First, setup times usually only depend on the destination batch (in terms of the routing scheduling problem that would mean that the travel time between nodes $u$ and $v$ depend only on $v$); second, there is no initial state of machines which can correspond to the depot. Including the depot into the picture makes the problem to be a generalization of a metric TSP, and it worth mentioning that such combinations of hard discrete optimization problems get more attention during the last decades. For example, the routing scheduling model appeared independently while considering tasks arising both in production (see, e.g. \cite{AvBe96,AvBe99}), so in the service industry \cite{ChouLin07,YuLinChou10}.

The general routing open shop problem contains the metric TSP as a special case, moreover, the problem with a single machine is equivalent to the metric TSP and therefore is strongly NP-hard. On the other hand, the problem with zero travel times (or  with \(G=K_1\)) is equivalent to the open shop problem and is NP-hard for \(m\geqslant3\). However it is known, that the routing open shop problem remains NP-hard even in the simplest case \(RO2|G=K_2|R_{\max}\) with two machines and just two nodes of the transportation network \cite{AvBeCh1}. On the other hand, FPTAS for such a case is described in \cite{KoFPTAS}.

Our research aims on the description of wide polynomially solvable cases of NP-hard problem $RO2||R_{\max}$. A few such cases for the $RO2|G=K_2|R_{\max}$ problem can be found in \cite{KoFPTAS,ChPya}, see Sections \ref{sec:prelim}, \ref{sec:reduction} and \ref{sec:chaincases} for details.  


Although our research focuses on the two-machine version of the problem, the progress made in the study of the $ROm||R_{\max}$ problem should be mentioned as well. A series of approximation algorithms for the $ROm||R_{\max}$ problem was developed, starting with the $\frac{m+4}{2}$-approximation \cite{AvBeCh1}. The best known algorithm up to date has the approximation ration guarantee of $O(\log m)$ \cite{Kologm}. (An intriguing open question is whether an approximation algorithm with a constant approximation ratio exists for the $ROm||R_{\max}$ problem.)
A number of papers is devoted to the research of a special case with unit processing times \cite{GolPya19,BevPyat16,BevPyatSev19}.

\medskip

In this paper we consider the problem \(RO2|G=tree|R_{\max}\), and describe several special cases which are solvable to the optimum in linear time, with the optimal makespan equal to the standard lower bound. The main special cases are formulated in terms of load distribution between the nodes, the formulation involves the definitions of an overloaded node and an \emph{overloaded edge} (see Section \ref{sec:prelim}), and is based on a special procedure of instance reduction and its properties (Section \ref{sec:reduction}). For the sake of completeness we also provide a couple of additional special cases for $RO2|G=chain|R_{\max}$ problem formulated in terms of properties of the diagonal job, which plays important role in the Gonzalez-Sahni algorithm for the $O2||C_{\max}$ problem (Theorem \ref{theo:kononovextended}). These cases are elementary extensions of known classes for $RO2|G=K_2|R_{\max}$ described by Kononov in \cite{KoFPTAS} (see Theorem \ref{theo:kononovcases}).

The structure of the remainder of the paper is as follows. Section \ref{sec:prelim} contains a detailed problem description, necessary notation and the formulation of known results we use. In Section \ref{sec:reduction}, we describe the procedure of instance reduction, which is the main part of our algorithm. Polynomially solvable outcomes of the instance reduction procedure are described in Section \ref{sec:chaincases}, followed by the description of sufficient conditions of polynomial solvability in terms of the properties of the initial instance in Section \ref{sec:conditions}. Concluding remarks and some open questions are given in Section \ref{sec:conclusion}.

\section{Preliminary notes}\label{sec:prelim}
Let us give a formal description of the routing open shop problem.

A problem instance combines inputs from the metric TSP and the open shop problem in the following manner.
A connected graph \(G=\langle V,E\rangle\) is given, a non-negative weight function \(\tau:E\to\Zplus\) is defined. 
One of the nodes \(v_0\in V\) is chosen to be the \emph{depot}. 
Jobs from the set \({\cal J}=\{J_1,\ldots,J_n\}\) are distributed among the nodes from \(V\). A set of jobs located at \(v\in V\) is denoted by \({\cal J}(v)\) and is non-empty for any node with possible exclusion of the depot. Machines from the given set \({\cal M}=\{M_1,\ldots,M_m\}\) are initially located at the depot and each machine can travel over the edges of \(G\), travel time of each machine over an edge \(e\in E\) equal to \(\tau(e)\). Any number of machines can travel over the same edge in any direction at the same time. Machines are allowed to visit each node multiple times therefore we assume machines use the shortest paths while traveling from one location to another.  
Each machine \(M_i\) has to perform an \emph{operation} \(O_{ji}\) on every job \(J_j\). This operation takes \(p_{ji}\in\Zplus\) time units and requires the machine to be at the location of \(J_j\): while machine is in the node \(v\), it can only process operations of jobs from \({\cal J}(v)\). Different operations of the same job cannot be processed simultaneously, and each machine can process at most one operation at a time. Machines have to return to the depot after processing all the operations. We use notation \(p_{ji}(I)\), \(G(I)\), \(\tau(I;e)\) and \({\cal J}(I;v)\), if we want to specify a problem instance \(I\). 

A \emph{schedule} \(S\) can be described by specifying the starting time $s_{ji}$ for each operation $O_{ji}$:
\[S=\left\{s_{ji}|i=1,\ldots,m,\,j=1,\ldots,n\right\}.\]
The completion time of operation $O_{ji}$ in a schedule $S$ is denoted by $c_{ji}(S)=s_{ji}(S)+p_{ji}$, notation $S$ is omitted when not needed.

Let \(\dist(v,u)\) denote the weighted distance between the nodes \(v\) and \(u\) (and {\em vice versa}), {\it i.e.} the minimal total weight of edges belonging to some chain connecting \(v\) and \(u\). So \(\dist(v,u)\) is the shortest time needed for a machine to reach \(u\) from \(v\). We also use notation \(\dist(I;v,u)\) for a specific instance \(I\).

\begin{definition}A schedule \(S\) for an instance \(I\) is referred to as \emph{feasible} if it satisfies the following conditions:
\begin{enumerate}
	\item If $i_1=i_2$ or $j_1=j_2$ (but not both) then
	\[(s_{j_1i_1},c_{j_1i_1})\cap(s_{j_2i_2},c_{j_2i_2})=\emptyset.\]
	\item If operation of job \(J_j\in{\cal J}(v)\) is the first to start by machine \(M_i\) then
	\[s_{ji}\geqslant\dist(I;v_0,v).\]
	\item If machine \(M_i\) processes operation \(O_{ji}\) before the processing of an operation \(O_{j'i}\), \(J_j\in{\cal J}(v)\), and \(J_{j'}\in{\cal J}(v')\), then
	\[s_{j'i}\geqslant c_{ji}+\dist(I;v,v').\]
\end{enumerate}
\end{definition}
Condition 1 means that intervals of processing of \emph{dependent} operations ({\it i.e.} operations of the same job or of the same machine) do not overlap. Conditions 2 and 3 mean that machine cannot start an operation before it reaches its location.

Suppose an operation of job \(J_j\in{\cal J}(v)\) is the last to be processed by machine \(M_i\) in some schedule \(S\). Then we define the \emph{release time} of machine \(M_i\) as
\[R_i(S)\doteq c_{ji}(S)+\dist(v,v_0).\]
The \emph{makespan} of schedule \(S\) is \(R_{\max}(S)\doteq\max\limits_iR_i(S)\). The goal is to find a feasible schedule minimizing the makespan.

\medskip

For some problem instance \(I\) we use the following
\begin{notation}
\begin{itemize}
	\item \(\ell_i(I)\doteq\sum\limits_{j=1}^np_{ji}(I)\) --- the \emph{load} of machine \(M_i\);
	\item \(\ell_{\max}(I)\doteq\max\limits_i\ell_i(I)\) --- the maximal machine load;
	\item \(d_j(I)\doteq\sum\limits_{i=1}^mp_{ji}(I)\) --- the \emph{length} of job \(J_j\);
	\item \(d_{\max}(I;v)\doteq\max\limits_{J_j\in{\cal J}(v)}d_j(I)\) --- the maximal job length at node \(v\);
	\item \(\Delta(I;v)\doteq\sum\limits_{J_j\in{\cal J}(v)}d_j(I)\) --- the \emph{load} of node \(v\);
	\item \(\Delta(I)\doteq\sum\limits_{v\in V}\Delta(I;v)\) --- the \emph{total load} of instance \(I\);
	\item \(T^*(I)\) --- the optimum of the underlying TSP, i.e. the length of the shortest a cyclic route visiting each node at least once;
	\item \(R^*_{\max}(I)\) --- the optimal makespan.
\end{itemize}
We omit \(I\) from the notation in case when it does not lead to a confusion.
\end{notation}

The following \emph{standard lower bound} on the optimum for the routing open shop problem was introduced in \cite{AvBeCh2}:
\begin{equation}\label{eq:SLB}\bar R(I)\doteq\max\left\{\ell_{\max}(I)+T^*(I),\max_{v\in V}\bigl(d_{\max}(I;v)+2\dist(I;v_0,v)\bigr)\right\}.\end{equation}
Note that \(\bar R\) coincides with \(\bar C\) in case when all edges have zero weight or $G=K_1$ (in this case our problem is reduced to the classical open shop problem).

Our study is focused on the case of two machines. In this case we use simplified notation for the operations of each job \(J_j\): \(a_j\) and \(b_j\) instead of \(O_{j1}\) and \(O_{j2}\), respectively. Moreover, we use the same notation (\(a_j\) and \(b_j\)) for operations' processing times whenever it does not lead to a confusion.

\medskip

We use the following definitions inherited from \cite{KoSeCh99}.

\begin{definition} 
A feasible schedule $S$ for a problem instance $I$ is referred to as \emph{normal} if $R_{\max}(S)=\bar R(I)$. Instance $I$ is \emph{normal} if it admits construction of a normal schedule.

A class of instances is \emph{normal} if it consists of normal instances only. A normal class ${\cal K}$ is referred to as \emph{efficiently normal} if there exists a polynomial time algorithm for solving any instance from ${\cal K}$ to the optimum.
\end{definition}

The goal of this paper is to describe wide efficiently normal classes for the $RO2|G=tree|R_{\max}$ problem. 
Below we describe a few such classes known from previous research.

The first efficiently normal class of instances of $RO2|G=K_2|R_{\max}$ is due to Kononov \cite{KoFPTAS} and its description is based on so-called diagonal job, which can be defined as follows.

\begin{definition}The \emph{diagonal} job of an instance of the $O2||C_{\max}$ (or $RO2||R_{\max}$) problem is such a job $J_r\in{\cal J}$ that
\[r=\arg\max_j\bigl\{\min\{a_j,b_j\}\bigr\}.\]
\end{definition}

We also need the following

\begin{definition}\label{def:earlyschedule}A feasible schedule $S$ if referred to as \emph{early} if no operation $O_{ji}$ can start earlier than at $s_{ji}(S)$, providing that the sequences of operations of each job and each machine from $S$ are preserved, without violating the feasibility. 
\end{definition}
Note that any early schedule is uniquely defined by sequences of operations of each job and each machine.

\begin{theorem}[Kononov, \cite{KoFPTAS}]\label{theo:kononovcases}
A class of instances of the \(RO2|G=K_2|R_{\max}\) problem satisfying at least one of the following properties of the diagonal job $J_r$ is solvable to the optimum in linear time:
\begin{enumerate}
	\item $J_r\in{\cal J}(v_0)$,
	\item $d_r\geqslant\ell_{\max}$.
\end{enumerate}
\end{theorem}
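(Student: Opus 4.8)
\emph{Proof proposal.} Since the standard lower bound $\bar R(I)$ never exceeds $R^*_{\max}(I)$, it suffices in each case to construct, in linear time, a feasible schedule whose makespan equals $\bar R(I)$: such a schedule is optimal, so the class is efficiently normal. For $G=K_2$ I would first specialize the bound. With $\tau$ the weight of the single edge we have $T^*=2\tau$, $\dist(v_0,v_0)=0$ and $\dist(v_0,v_1)=\tau$, so
\[
\bar R=\max\bigl\{\ell_{\max}+2\tau,\ d_{\max}(v_0),\ d_{\max}(v_1)+2\tau\bigr\}.
\]
The engine of the construction is the Gonzalez--Sahni algorithm \cite{GoSa}, which produces a normal $O2||C_{\max}$ schedule in linear time; I will exploit the structural fact that it can be arranged so that one machine runs without idle time and the diagonal job serves as the bridge between the two machine timelines.

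The plan is to seek, for each machine, a schedule of the shape $[\text{depot block}]\,\tau\,[v_1\text{ block}]\,\tau\,[\text{depot block}]$: each machine processes a portion of $\mathcal J(v_0)$, makes a single round trip to $v_1$ to execute all of $\mathcal J(v_1)$, and then finishes the remaining depot operations, with both the $v_1$-subinstance and the $v_0$-subinstance sequenced by Gonzalez--Sahni. The makespan of such a schedule is $2\tau$ plus the processing done by the bottleneck machine plus any idle time, so the whole problem reduces to choosing the pre-trip/post-trip split of the depot operations so that the critical machine has no idle and the two machines' $v_1$-blocks, offset by the difference of their pre-trip loads, create no job conflict across the travel gap. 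I expect this alignment step --- guaranteeing an idle-free critical machine while respecting the open-shop non-overlap constraint across the gap --- to be the main obstacle, and it is precisely here that the diagonal-job hypotheses are used.

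In \textbf{Case 1} the diagonal job $J_r\in\mathcal J(v_0)$ maximizes $\min\{a_j,b_j\}$, so every job at $v_1$ is ``thin'': $\min\{a_j,b_j\}\le\min\{a_r,b_r\}$. A thin open-shop subinstance admits a Gonzalez--Sahni schedule in which the two machines start within $\min\{a_r,b_r\}$ of one another, so the $v_1$-block tolerates a small offset between the machines' arrival times. I would place $J_r$ so that it absorbs this offset at the depot --- one of its operations in the pre-trip block, the other in the post-trip block --- letting the thinner machine do extra depot work before leaving. A short case analysis on which of the three terms of $\bar R$ is the maximum then shows the bottleneck machine is idle-free and the makespan equals $\bar R$.

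In \textbf{Case 2} I may assume $J_r\in\mathcal J(v_1)$, since $J_r\in\mathcal J(v_0)$ is already covered by Case 1. Taking $a_r\ge b_r$, the hypotheses $d_r\ge\ell_{\max}$ and the diagonality of $J_r$ give $a_r\ge\sum_{j\ne r}b_j$ and $b_r\ge\sum_{j\ne r}a_j$, and moreover $d_k\le\ell_1\le\ell_{\max}\le d_r$ for every other $v_1$-job, so that $d_{\max}(v_1)=d_r$ and $\bar R=\max\{d_r+2\tau,\ d_{\max}(v_0)\}$. The explicit $v_1$-block in which $M_1$ runs $a_r$ first and then all remaining $a$-operations, while $M_2$ runs all remaining $b$-operations first and then $b_r$, is conflict-free and completes at $d_r$ after arrival; the comparatively short depot operations can then be appended without disturbing this span. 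Verifying feasibility and that the makespan meets $\bar R$ --- again splitting on the dominant term, in particular the subcase where $d_{\max}(v_0)$ dominates --- completes the argument. Every step is linear time, so the class is efficiently normal.
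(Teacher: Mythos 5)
Your high-level plan --- certify optimality by building, in linear time, a schedule of makespan $\bar R$, with Gonzalez--Sahni as the engine and a single round trip per machine --- is the same as the paper's, which proves the chain generalization (Theorem \ref{theo:kononovextended}) and gets the $K_2$ statement by specialization. But the two steps you yourself flag as the substance of the argument are never carried out, and the one construction you do spell out fails. In Case 1, the lemma you invoke --- that a ``thin'' $v_1$-subinstance tolerates an arrival offset of up to $\min\{a_r,b_r\}$ between the machines, and that the pre-/post-trip split of the depot operations can simultaneously keep the bottleneck machine idle-free --- is precisely the whole difficulty; neither the offset-tolerance claim nor the choice of split is proved, and ``a short case analysis then shows'' is where the theorem actually lives. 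The paper avoids the alignment problem entirely: since Gonzalez--Sahni allows arbitrary orders inside ${\cal J}_A\setminus\{J_r\}$ and inside ${\cal J}_B$, it puts the \emph{same} sequence ${\cal J}_A(v_0),{\cal J}_A(v_1),{\cal J}_B(v_1),{\cal J}_B(v_0)$ on both machines (with $b_r$ first on $M_2$ and $a_r$ last on $M_1$), so travel times are inserted at identical positions of identical sequences, and the normal GS schedule (Lemma \ref{lem:GoSa}) acquires no new idle time --- no offset lemma is needed. Note also that your split (all of ${\cal J}(v_1)$ in one block per machine) differs from the paper's $A$/$B$ split of the $v_1$ jobs.

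In Case 2 there is a concrete error. From $d_r\geqslant\ell_{\max}$ you correctly derive $a_r\geqslant\sum_{j\ne r}b_j$ and $b_r\geqslant\sum_{j\ne r}a_j$; these also give $d_k\leqslant d_r$ for every depot job ($a_k\leqslant b_r$, $b_k\leqslant a_r$), so $\bar R=d_r+2\tau$ and your worried subcase ``$d_{\max}(v_0)$ dominates'' is vacuous. In your $v_1$-block, $M_2$ completes $b_r$ exactly at $\tau+d_r$ (it must wait for $a_r$ in any event, since $\sum_{j\ne r}b_j\leqslant a_r$) and returns to the depot at $2\tau+d_r=\bar R$; hence $M_2$ has \emph{zero} slack after the trip, and ``appending'' its depot $b$-operations yields makespan $2\tau+d_r+\sum_{J_j\in{\cal J}(v_0)}b_j>\bar R$ whenever the depot carries positive $M_2$-load. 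The construction is repairable only by an asymmetric split that you never state: $M_2$ must do all its depot work \emph{before} traveling (this fits, since $\sum_{j\ne r}b_j\leqslant a_r$ still lets $M_2$ reach the start of $b_r$ by $\tau+a_r$), while $M_1$ defers all its depot work until after the return (it fits in the slack left by $b_r\geqslant\sum_{j\ne r}a_j$). This staggering is exactly the paper's scheme (Figure \ref{fig:KeC2}), in mirrored form: there $M_1$ front-loads depot work and performs $a_r$ last with $s_{r1}=c_{r2}$, while $M_2$ performs $b_r$ immediately on arrival and back-loads its depot work. Until the splits are fixed and feasibility is verified against all three job-conflict types (job $J_r$, the other $v_1$ jobs across the travel gap, and the depot jobs across the two machines' staggered depot blocks), the proposal does not establish the theorem.
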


The proof of Theorem \ref{theo:kononovcases} is based on properties of the Gonzalez-Sahni algorithm for open shop problem $O2||C_{\max}$ \cite{GoSa}. This proof can be easily extended on a problem $RO2|G=chain|R_{\max}$ under the following conditions:
\begin{enumerate}
	\item the depot $v_0$ is one of the terminal nodes of chain $G$,
	\item job $J_r$ is located at some terminal node of $G$.
\end{enumerate} To present this proof, we use the following Gonzalez-Sahni formulation, similar to that described in \cite{KoFPTAS}.

\begin{procedure}{Gonzalez-Sahni algorithm \cite{GoSa}}
{{\tt Input:} An instance $I$ of $O2||C_{\max}$ problem.\\
{\tt Output:} An optimal schedule for $I$.}
\indentitem{Partition the set of jobs ${\cal J}$ into two subsets: 
\[{\cal J}_A=\{J_j|a_j\leqslant b_j\}\text{, }{\cal J}_B=\{J_j|a_j> b_j\}.\]}
\indentitem{Let $J_r$ be a diagonal job. Without loss of generality assume $J_r\in{\cal J}_A$.}
\indentitem{Choose enumerations of operations from ${\cal J}\setminus\{J_r\}$ ($\mathfrak A$ and $\mathfrak B$  for machines $M_1$, $M_2$ respectively) in the following way:
\begin{enumerate}
	\item {\bfseries If} $d_r<\ell_{\max}$ {\bfseries then}  $\mathfrak A=\mathfrak B$ is a concatenation of arbitrary enumerations of ${\cal J}_A\setminus\{J_r\}$ and ${\cal J}_B$,
	\item {\bfseries If} $d_r\geqslant\ell_{\max}$ {\bfseries then} both $\mathfrak A$ and $\mathfrak B$ are arbitrary and independent.
\end{enumerate} }
\indentitem{Construct an early schedule in the following manner:
\begin{enumerate}
	\item Machine $M_1$ processes operation of jobs from ${\cal J}\setminus\{J_r\}$ according to $\mathfrak A$, then $a_r$;
	\item Machine $M_2$ processes $b_r$, then operations of jobs from ${\cal J}\setminus\{J_r\}$ according to $\mathfrak B$;
	\item Operations of all jobs except $J_r$ are processed first by $M_1$, then by $M_2$.
\end{enumerate}}
\end{procedure}

\begin{lemma}[\cite{GoSa}]\label{lem:GoSa} The Gonzalez-Sahni algorithm runs in $O(n)$ time and obtains a normal schedule for any instance of $O2||C_{\max}$.\end{lemma}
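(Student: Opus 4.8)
The running time is immediate: Step~1 partitions ${\cal J}$ in $O(n)$, and the early schedule of Step~4 is produced by scanning each operation once, so the whole procedure is linear. The schedule is feasible by construction, hence the only real task is to establish normality, i.e.\ $C_{\max}=\bar C$ with $\bar C=\max\{\ell_1,\ell_2,\max_j d_j\}$. Since $\bar C$ is a lower bound for $O2||C_{\max}$, it suffices to prove $C_{\max}\le\bar C$. I would first dispose of machine $M_1$: every non-diagonal job is processed on $M_1$ before $M_2$, so $M_1$ runs its operations of ${\cal J}\setminus\{J_r\}$ back to back from time $0$ without idling, and the only possible delay is that the final operation $a_r$ waits for $b_r$ to complete. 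Thus $C_1=\max\{\ell_1,d_r\}$, which is $\le\bar C$ in both cases of Step~3. It remains to bound the completion time $C_2$ of $M_2$.

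Consider first the main case $d_r<\ell_{\max}$. Here $M_1$ and $M_2$ share the common order $J_{(1)},\dots,J_{(n-1)}$ in which ${\cal J}_A\setminus\{J_r\}$ precedes ${\cal J}_B$, while $M_2$ opens with $b_r$. Writing the early-start recursion for the completion times along $M_2$ and unrolling it, $C_2$ equals either $\ell_2$ or $\max_{k}\bigl(\sum_{l\le k}a_{(l)}+\sum_{l\ge k}b_{(l)}\bigr)$; the latter is the length of the ``staircase'' path that runs along $M_1$ up to job $J_{(k)}$ and then switches to $M_2$. So normality reduces to bounding this staircase length by $\bar C$ for the critical index $k$.

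This estimate is the main obstacle, and I would handle it by splitting on the position of $J_{(k)}$ in the common order. If $J_{(k)}\in{\cal J}_A$, then every earlier job also lies in ${\cal J}_A$, so $a_{(l)}\le b_{(l)}$ for $l\le k$; telescoping gives $\sum_{l\le k}a_{(l)}-\sum_{l<k}b_{(l)}\le a_{(k)}$, and the defining property of the diagonal job yields $a_{(k)}=\min\{a_{(k)},b_{(k)}\}\le a_r\le b_r$ (using $J_r\in{\cal J}_A$ from Step~2), so the staircase length is at most $\ell_2$. Symmetrically, if $J_{(k)}\in{\cal J}_B$ then every later job lies in ${\cal J}_B$, $b_{(l)}\le a_{(l)}$ for $l\ge k$, and $b_{(k)}=\min\{a_{(k)},b_{(k)}\}\le a_r$, which bounds the staircase length by $\ell_1$. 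Either way $C_2\le\bar C$, and the difficulty lies precisely in matching the telescoping signs to the ${\cal J}_A$-before-${\cal J}_B$ ordering before collapsing the bound to a single operation $\le a_r$.

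Finally, the case $d_r\ge\ell_{\max}$ is straightforward. From $d_r\ge\ell_1$ and $d_r\ge\ell_2$ I would deduce $\sum_{j\ne r}a_j\le b_r$ and $\sum_{j\ne r}b_j\le a_r$; in particular $d_j\le d_r$ for all $j$, so $\bar C=d_r$. These two inequalities render the schedule idle-free: $M_1$ finishes all non-diagonal $a$-operations inside $[0,b_r]$ and then runs $a_r$ on $[b_r,d_r]$, while $M_2$ completes $b_r$ on $[0,b_r]$ and afterwards, all non-diagonal $a$-operations being already done, processes the remaining $b$-operations without waiting, ending at $\ell_2\le d_r$. Hence $C_{\max}=d_r=\bar C$, completing the proof.
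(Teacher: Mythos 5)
Your proof is correct. Note, however, that the paper does not prove this lemma at all: it is stated with a citation to Gonzalez and Sahni \cite{GoSa} as a known result, so there is no in-paper argument to diverge from. Your verification is the standard critical-path analysis and it is complete: $C_1=\max\{\ell_1,d_r\}\leqslant\bar C$ is immediate; in the case $d_r<\ell_{\max}$ the bound on the staircase term $\sum_{l\leqslant k}a_{(l)}+\sum_{l\geqslant k}b_{(l)}$ via the split $J_{(k)}\in{\cal J}_A$ versus $J_{(k)}\in{\cal J}_B$, telescoping against the opposite-machine operations and then using the diagonal-job property $\min\{a_{(k)},b_{(k)}\}\leqslant\min\{a_r,b_r\}=a_r\leqslant b_r$, is exactly what makes the common ${\cal J}_A$-before-${\cal J}_B$ order of Step~3.1 work; and in the case $d_r\geqslant\ell_{\max}$ your inequalities $\sum_{j\ne r}a_j\leqslant b_r$ and $\sum_{j\ne r}b_j\leqslant a_r$ correctly show the schedule is idle-free for arbitrary independent orders $\mathfrak A$, $\mathfrak B$, with $\bar C=d_r$ since every $d_j\leqslant d_r$. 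The only step you leave implicit is that the makespan of an early schedule equals the length of a longest path in the precedence graph --- a project-planning fact the paper itself invokes without proof in Lemma \ref{lem:overedge} --- and that $C_2$ is attained either by the all-$M_2$ path of length $\ell_2$ (which contains $b_r$, correctly excluded from your staircase sums) or by a path switching machines at some $k$; both are routine. So your argument is a sound self-contained substitute for the cited classical proof, and it matches in spirit the critical-path technique the paper uses in its own Lemmas \ref{lem:overedge} and \ref{lem:superovernode}.
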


\begin{theorem}\label{theo:kononovextended}
Let $I$ be an instance of the \(RO2|G=chain|R_{\max}\) problem with $J_r$ being a diagonal job, $G$ is a chain $(v_0,\dots,v_g)$. Then any of the following conditions implies $I$ is normal and an optimal schedule for $I$ can be found in $O(n)$:
\begin{enumerate}
	\item $J_r\in{\cal J}(v_0)$,
	\item $J_r\in {\cal J}(v)$ and $d_r\geqslant\ell_{\max}$.
\end{enumerate}
\end{theorem}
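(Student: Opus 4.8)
The plan is to lift the Gonzalez--Sahni schedule of the underlying $O2||C_{\max}$ instance to the routing setting, using that the depot is a terminal of the chain and that $J_r$ sits at a terminal as well. Write $\delta_k=\dist(I;v_0,v_k)$; since $v_0$ is a terminal, a cyclic route visiting every node collapses to a single out-and-back traversal, so $T^*(I)=2\delta_g$. Following the Gonzalez--Sahni procedure I would assume $J_r\in{\cal J}_A$, run Steps~1--3 to fix $\mathfrak A,\mathfrak B$, and then build the routing schedule in which both machines perform one traversal $v_0\to v_g\to v_0$, processing while moving. The point is that each intermediate node is passed twice, so an operation located there may be executed on the outward or on the return leg; this gives enough freedom to realise any Gonzalez--Sahni order within a single sweep (for Step~3.1 one places ${\cal J}_A$-operations on the outward leg and ${\cal J}_B$-operations on the return leg). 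Let $J_r$ reside at the terminal node $v_p$ ($v_p=v_0$ in case~1, $v_p=v_g$ in case~2). Machine $M_1$ processes every $a_j$, $j\ne r$, during the sweep and executes $a_r$ at $v_p$ as its last operation; machine $M_2$ executes $b_r$ first at $v_p$ (which it reaches with no processing before it) and then processes every $b_j$, $j\ne r$, during the sweep.

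Next I would bound the two release times separately against the standard lower bound (\ref{eq:SLB}). For the leading machine $M_1$ there is no waiting on non-diagonal jobs, since it is first on each of them; the only possible delay is before $a_r$, while $M_2$ finishes $b_r$. A direct computation then gives $R_1=\max\{\ell_1+T^*,\,d_r+2\delta_p\}$, and both terms are at most $\bar R(I)$: the first because $\ell_1\le\ell_{\max}$, the second because $d_r+2\delta_p\le d_{\max}(I;v_p)+2\delta_p$ is exactly a node-load term of (\ref{eq:SLB}). Thus $R_1\le\bar R(I)$ unconditionally.

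The delicate estimate, and the main obstacle, is the one for the follower $M_2$, where $R_2=\ell_2+T^*+W$ with $W$ the total idle $M_2$ spends waiting for $M_1$ to clear the matching $a$-operations (feasibility conditions~2--3); the goal is to show $W$ cannot push $R_2$ past $\bar R(I)$. The key observation is that the arrival-time constraints never bind on their own: both machines follow the same node sequence on each leg, in monotone order of distance, shifted only by the single block $b_r$, so every instant at which $M_2$ would overtake $M_1$ on some $b_j$ is governed by precisely the inequality that makes the Gonzalez--Sahni open-shop schedule normal (Lemma~\ref{lem:GoSa}). Here the two hypotheses enter. In case~1 the anchoring of $J_r$ at the depot makes $b_r$ incur no pre-travel, so the shift between the machines equals $b_r$ exactly as in the open-shop argument. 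In case~2 the inequality $d_r\ge\ell_{\max}$ forces $J_r$ to be a longest job: were some $J_{r'}\ne J_r$ longer, then (taking $a_{r'}\le b_{r'}$) diagonality gives $b_r\ge\min\{a_{r'},b_{r'}\}=a_{r'}$, whence $\ell_{\max}\ge\ell_2\ge b_r+b_{r'}\ge a_{r'}+b_{r'}=d_{r'}>d_r\ge\ell_{\max}$, a contradiction; consequently $d_r=d_{\max}(I)$ and the node-load term $d_r+2\delta_g$ at $v_p=v_g$ equals $\bar R(I)$ and already budgets the two-way trip to $J_r$, dominating the residual idle of $M_2$. Combining the two bounds yields $R_{\max}=\bar R(I)$, i.e. normality. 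Finally, since Gonzalez--Sahni runs in $O(n)$ (Lemma~\ref{lem:GoSa}), the jobs can be bucketed by node along the chain in $O(n)$, and the sweep is produced in a single linear pass, so the whole algorithm is $O(n)$.
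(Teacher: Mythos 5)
Your proposal is correct and follows essentially the same route as the paper: you exploit the ordering freedom in the Gonzalez--Sahni algorithm to make both machines' operation sequences monotone sweeps along the chain (with $b_r$ first on $M_2$ and $a_r$ last on $M_1$), then insert travel times and argue that every binding precedence shifts equally on both machines, so no idle beyond the open-shop analysis appears --- which is exactly the paper's ``inserting travel times does not introduce extra delay intervals'' argument, stated at a comparable level of detail. Your additional explicit release-time bounds and the observation that $d_r\geqslant\ell_{\max}$ forces $d_r=d_{\max}(I)$ (so $\bar R=d_r+2\dist(v_0,v_g)$ in case~2) are correct refinements of, not departures from, the paper's proof.
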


\begin{proof}The algorithm for \(RO2|G=chain|R_{\max}\) is based on the Gonzalez-Sahni algorithm and its properties. The key fact is that operations of jobs from each of the sets ${\cal J}_A$ and ${\cal J}_B$ (Step 3) can be processed in an arbitrary order, and we show that is it possible to choose the order in such a way that each machine is guaranteed to take an optimal route. Thus the algorithm of building an optimal schedule in both cases has the following structure:
\begin{itemize}
	\item Choose a specific orders $\mathfrak A$ and $\mathfrak B$ for operations of non-diagonal jobs to apply at Step 3 of Gonzalez-Sahni algorithm.
	\item Build a schedule $S_{\mathrm GS}$ ignoring travel times using Gonzalez-Sahni algorithm.
	\item ``Insert'' travel times into $S_{\mathrm GS}$ to obtain a schedule for the initial $RO2||R_{\max}$ problem instance.
\end{itemize}

Hereafter we assume without loss of generality that $J_r\in {\cal J}_A$, and use notation ${\cal J}_A(v)\doteq{\cal J}_A\cap{\cal J}(v)\setminus\{J_r\}$ (${\cal J}_B(v)\doteq{\cal J}_B\cap{\cal J}(v)$) for each $v\in V$. Let us specify the orders $\mathfrak A$ and 
$\mathfrak B$ for both cases of the Theorem.

{\bfseries Case 1.} Let ${\mathfrak A}={\mathfrak B}$ be a concatenation of arbitrary enumerations of ${\cal J}_A(v_0),{\cal J}_A(v_1),\dots,{\cal J}_A(v_g),{\cal J}_B(v_g),\dots,{\cal J}_B(v_0)$. The structure of the resulting schedule is shown in Figure \ref{fig:KeC1}. Thick arcs represent multiple precedence constraints: not between the two blocks, but between respective operations of the same job.

{\bfseries Case 2.} Let ${\mathfrak A}$ be a concatenation of arbitrary enumerations of ${\cal J}(v_0),{\cal J}(v_1),\dots,{\cal J}(v_g)\setminus\{J_r\}$, and ${\mathfrak B}$ be a concatenation of arbitrary enumerations of ${\cal J}(v_g)\setminus\{J_r\},{\cal J}(v_{g-1}),\dots,{\cal J}(v_0)$ (see Figure \ref{fig:KeC2}). A thick dashed line represents the connection between operations of the diagonal job: $c_{r2}=s_{r1}$.   

Note that in both cases the orders $\mathfrak A$ and $\mathfrak B$ comply with the Gonzalez-Sahni algorithm, hence the schedule built is normal. Now it is sufficient to observe that inserting travel times into those schedules does not introduce extra delay intervals into their structure.
\end{proof}

\begin{figure}%
\begin{center}
	\begin{tikzpicture}[yscale=1.2]
		\node[draw] (aA0) at (0,1) {${\cal J}_A(v_0)$};
		\node[draw] (aA1) at (2,1) {${\cal J}_A(v_1)$};
		\node (Adots1) at (3.7,1) {$\cdots$};
		\node[pblock] (ag) at (6,1) {\nodepart{one}${\cal J}_A(v_g)$
																 \nodepart{two}${\cal J}_B(v_g)$};
		\node (Adots2) at (8.4,1) {$\cdots$}; 
		\node[draw] (aB0) at (10,1) {${\cal J}_B(v_0)$};
		\node[draw,circle] (ar) at (11.5,1) {$a_r$};
		\draw[->] (aA0) -- (aA1);
		\draw[->] (aA1) -- (Adots1);
		\draw[->] (Adots1) -- (ag);
		\draw[->] (ag) -- (Adots2);
		\draw[->] (Adots2) -- (aB0);
		\draw[->] (aB0) -- (ar);
		
		\node[draw,circle] (br) at (0,0) {$b_r$};
		\draw (br) edge[bend left=10] (2,0.5);
		\draw (2,0.5) -- (9.5,0.5);
		\draw (9.5,0.5) edge[->,bend right=10] (ar);
		
		\node[draw] (bA0) at (1.5,0) {${\cal J}_A(v_0)$};
		\node[draw] (bA1) at (3.5,0) {${\cal J}_A(v_1)$};
		\node (Bdots1) at (5.2,0) {$\cdots$};
		\node[pblock] (bg) at (7.5,0) {\nodepart{one}${\cal J}_A(v_g)$
																 \nodepart{two}${\cal J}_B(v_g)$};
		\node (Bdots2) at (9.9,0) {$\cdots$}; 
		\node[draw] (bB0) at (11.5,0) {${\cal J}_B(v_0)$};
		
		\draw[->] (br) -- (bA0);
		\draw[->] (bA0) -- (bA1);
		\draw[->] (bA1) -- (Bdots1);
		\draw[->] (Bdots1) -- (bg);
		\draw[->] (bg) -- (Bdots2);
		\draw[->] (Bdots2) -- (bB0);
		
		\draw[very thick,->] (aA0) -- (bA0);
		\draw[very thick,->] (aA1) -- (bA1);
		\draw[very thick,->] (ag) -- (bg);
		\draw[very thick,->] (aB0) -- (bB0);
		
	\end{tikzpicture}
\end{center}
\caption{Structure of the schedule for Theorem \ref{theo:kononovextended}, Case 1.}%
\label{fig:KeC1}%
\end{figure}

\begin{figure}%
\begin{center}
	\begin{tikzpicture}[yscale=1.2]
		\node[draw] (A0) at (-3,1) {${\cal J}(v_0)$};
		\node[draw] (A1) at (-1,1) {${\cal J}(v_1)$};
		\node (Adots) at (0.5,1) {$\cdots$};
		\node[draw] (Ag) at (2,1) {${\cal J}(v_g)$};
		\node[draw,ellipse] (ar) at (6,1) {\makebox[4cm]{$a_r$}};
		\node[draw,ellipse] (br) at (0,0) {\makebox[4cm]{$b_r$}};
		\node[draw] (Bg) at (4,0) {${\cal J}(v_g)$};
		\node (Bdots) at (5.5,0) {$\cdots$};
		\node[draw] (B1) at (7,0) {${\cal J}(v_1)$};
		\node[draw] (B0) at (9,0) {${\cal J}(v_0)$};
		
		\draw[->] (A0) -- (A1);
		\draw[->] (A1) -- (Adots);
		\draw[->] (Adots) -- (Ag);
		\draw[->] (Bg) -- (Bdots);
		\draw[->] (Bdots) -- (B1);
		\draw[->] (B1) -- (B0);
		
		\draw[->,dotted] (Ag) -- (Bg);
		\draw[very thick,dashed] (br.east) -- (ar.west);
		
	\end{tikzpicture}
\end{center}
\caption{Structure of the schedule for Theorem \ref{theo:kononovextended}, Case 2.}%
\label{fig:KeC2}%
\end{figure}

The second normal class of instances of $RO2|G=K_2|R_{\max}$ was introduced in \cite{ChPya} and is based on the following
\begin{definition}\label{def:superoverloaded} A node $v$ is referred to as \emph{superoverloaded} if jobs from ${\cal J}(v)$ can be partitioned into three subsets ${\cal J}_1,{\cal J}_2,{\cal J}_3$ such that
\begin{enumerate}
	\item $\forall k\in\{1,2,3\}$ $\sum\limits_{J_j\in{\cal J}_k}d_j\leqslant\bar R-2\dist(v_0,v)$,
	\item $\forall k\ne l\in\{1,2,3\}$ $\sum\limits_{J_j\in{\cal J}_k\cup{\cal J}_l}d_j>\bar R-2\dist(v_0,v)$.
\end{enumerate}
Such a partition is referred to as {\em irreducible} one.
\end{definition}

It was proved in \cite{ChPya} that any instance of $RO2|G=K_2|R_{\max}$ containing a superoverloaded node is normal, and the optimal schedule for such an instance can be built in linear time providing that an irreducible partition is known.
We provide an elementary extension of this result on the special case with $G=chain$ in Section \ref{sec:chaincases}.

Unfortunately the verification of existence of an irreducible partition is NP-complete \cite{ChPya} and therefore the problem of obtaining of such a partition is NP-hard. However, there is a description (\cite{ChPya}) of a sufficient condition for a node to be superoverloaded, together with a polynomial time procedure of obtaining of an irreducible partition.

\begin{theorem}[\cite{ChPya}]\label{theo:superoverloaded}
Let $\Delta(v)>\frac32(\bar R-2\dist(v_0,v))+d_{\max}(v)$. Then $v$ is superoverloaded, and an irreducible partition can be obtained by the following procedure.
\end{theorem}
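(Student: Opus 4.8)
The plan is to realize the claimed procedure as a \emph{greedy list-scheduling} assignment of the jobs of ${\cal J}(v)$ into three subsets and to verify the two conditions of Definition~\ref{def:superoverloaded} directly. Throughout write $W\doteq\bar R-2\dist(v_0,v)$ and let $S_k\doteq\sum_{J_j\in{\cal J}_k}d_j$ be the load of the $k$-th subset, so that $S_1+S_2+S_3=\Delta(v)$. The key reformulation is that condition~2, $S_k+S_l>W$ for every pair, is equivalent to $\Delta(v)-S_m>W$ for every $m$, i.e.\ to the single inequality $\max_k S_k<\Delta(v)-W$. The procedure I would analyse scans the jobs of ${\cal J}(v)$ in arbitrary order and appends each of them to a currently least-loaded of the three subsets (breaking ties so the first three jobs land in three distinct subsets); it runs in $O(n)$ time since each step compares only three numbers.

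Two quantitative estimates drive the proof. First I would show $\Delta(v)\le 2W$: any cyclic route visiting $v$ costs at least $2\dist(v_0,v)$, whence $T^*\ge 2\dist(v_0,v)$, and combining this with $\bar R\ge\ell_{\max}+T^*$ from (\ref{eq:SLB}) gives $W=\bar R-2\dist(v_0,v)\ge\ell_{\max}$. Since with two machines $\Delta(v)\le\ell_1+\ell_2\le 2\ell_{\max}$, we get $\Delta(v)\le 2W$, hence $\Delta(v)-W\le W$. This is precisely the observation that lets condition~2 subsume condition~1: once $\max_k S_k<\Delta(v)-W$ holds it automatically yields $\max_k S_k<W$. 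Second I would prove the sharp bound $\max_k S_k\le\tfrac13\Delta(v)+\tfrac23 d_{\max}(v)$ for the greedy output. Let $J_j$ be the last job placed into the heaviest subset; just before that placement this subset was a lightest one, with some load $L$, so all three subsets carried load at least $L$ while the total load placed so far, which cannot exceed $\Delta(v)-d_j$, is at least $3L$. Thus $L\le\tfrac13(\Delta(v)-d_j)$ and the final load of the heaviest subset equals $L+d_j\le\tfrac13\Delta(v)+\tfrac23 d_j\le\tfrac13\Delta(v)+\tfrac23 d_{\max}(v)$.

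It remains to combine these with the hypothesis: $\Delta(v)>\tfrac32(\bar R-2\dist(v_0,v))+d_{\max}(v)=\tfrac32 W+d_{\max}(v)$ rearranges exactly to $\tfrac13\Delta(v)+\tfrac23 d_{\max}(v)<\Delta(v)-W$. Together with the Graham-type bound this forces $\max_k S_k<\Delta(v)-W$, which is condition~2, and via $\Delta(v)-W\le W$ also condition~1 (indeed $S_k<W$ strictly for every $k$); the two conditions jointly force every subset to be non-empty, and one also checks $\Delta(v)>\tfrac52 d_{\max}(v)$, so that ${\cal J}(v)$ genuinely admits a three-way split. The step I expect to be delicate is the second estimate: only the sharp coefficient $\tfrac23$ matches the $\tfrac32$ in the hypothesis, whereas the crude bound $\max_k S_k\le L+d_{\max}(v)\le\tfrac13\Delta(v)+d_{\max}(v)$ would prove the claim only under the stronger assumption $\Delta(v)>\tfrac32(W+d_{\max}(v))$. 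Everything hinges on excluding the last job on the heaviest subset from the $3L$ lower bound on the previously placed load; the reduction $\Delta(v)\le 2W$ is the other point that must not be overlooked, since it is what lets a single inequality capture both defining conditions at once.
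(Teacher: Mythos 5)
Your core computation is correct, and it is a genuinely different argument from the one the theorem points to. Writing $W\doteq\bar R-2\dist(v_0,v)$ and $S_k$ for the subset loads: your reformulation of condition~2 of Definition~\ref{def:superoverloaded} as $\max_k S_k<\Delta(v)-W$ is exact (since $S_k+S_l=\Delta(v)-S_m$); the bound $\Delta(v)\leqslant 2W$ follows as you say from $T^*\geqslant 2\dist(v_0,v)$ and $\bar R\geqslant\ell_{\max}+T^*$ in \eqref{eq:SLB}, so condition~1 is indeed subsumed by condition~2; the Graham-type estimate $\max_k S_k\leqslant\frac13\Delta(v)+\frac23 d_{\max}(v)$ for least-loaded greedy insertion is sound (the ``last job placed on the finally heaviest subset'' argument is exactly right), and you correctly observe that only the sharp coefficient $\frac23$ matches the hypothesis, which rearranges precisely to $\frac13\Delta(v)+\frac23 d_{\max}(v)<\Delta(v)-W$. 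For the three-way split, $W\geqslant d_{\max}(v)$ by the second term of \eqref{eq:SLB}, giving $\Delta(v)>\frac52 d_{\max}(v)$ and hence at least three jobs, so with your tie-breaking all subsets are non-empty. This proves, constructively and in linear time, that $v$ is superoverloaded; the proof in \cite{ChPya} instead analyzes the stated sequential prefix procedure directly, whereas your list-scheduling route is order-independent and arguably cleaner.

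There is, however, one genuine mismatch with the statement: the theorem claims not merely existence but that an irreducible partition is obtained \emph{by the stated Procedure {\tt Partition}}, and your proof silently replaces that procedure with a different one, leaving this clause unproven. It is not cosmetic, since the paper's pipeline (Step~3 of {\tt Reduction} and case~4 of Theorem~\ref{theo:scr}) relies on this specific prefix procedure and its variant {\tt Partition 2.0}, not on an arbitrary partitioner. Fortunately your own estimates close the gap quickly. From $\Delta(v)\leqslant 2W$ and the hypothesis, $d_{\max}(v)<\frac12 W$, so each threshold in the procedure is overshot by less than $d_{\max}(v)$: minimality of $x$ gives $\frac12 W<X\leqslant\frac12 W+d_{\max}(v)<W$; the required $y$ exists because $\Delta(v)-X\geqslant\Delta(v)-\frac12 W-d_{\max}(v)>W\geqslant W-X$, and minimality of $y$ gives $W-X<Y\leqslant W-X+d_{\max}(v)<W$; hence $X+Y>W$ but $X+Y\leqslant W+d_{\max}(v)$, so $S_3=\Delta(v)-X-Y>\frac12 W>0$ and $S_3<\Delta(v)-W\leqslant W$; finally the remaining pairwise conditions $S_1+S_3>W$ and $S_2+S_3>W$ are equivalent to $Y<\Delta(v)-W$ and $X<\Delta(v)-W$, both of which follow from $X,Y\leqslant\frac12 W+d_{\max}(v)<\Delta(v)-W$. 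With this paragraph added, your proof covers the full statement.
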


\begin{procedure}{Procedure {\tt Partition}.}
{Let ${\cal J}(v)=\{J_1,\dots,J_k\}$.}
\indentitem{Find minimal $x>1$ such that $\sum\limits_{j=1}^xd_j>\frac12(\bar R-2\dist(v_0,v))$.\newline
Set ${\cal J}_1=\{J_1,\dots,J_x\}$ and $X=\sum\limits_{j=1}^xd_j$.}
\indentitem{Find minimal $y>x$ such that $\sum\limits_{j=x+1}^yd_j>\bar R-2\dist(v_0,v)-X$.\newline 
Set ${\cal J}_2=\{J_{x+1},\dots,J_y\}$.}
\indentitem{Set ${\cal J}_3=\{J_{y+1},\dots,J_k\}$.}
\end{procedure}

This procedure runs correctly if for each job $J_j\in{\cal J}(v)$ its length  $d_j\leqslant\frac12(\bar R-2\dist(v_0,v))$. The condition of Theorem \ref{theo:superoverloaded} implies that inequality, and also guarantees the irreducibility of the partition obtained.
In general case the procedure {\tt Partition} still may be applied if a special treatment for ``long'' jobs (with $d_j>\frac12(\bar R-2\dist(v_0,v))$, if any) is provided. We describe the following version of the procedure guaranteed to run correctly in any case.

\begin{procedure}{Procedure {\tt Partition 2.0}.}
{Let ${\cal J}(v)=\{J_1,\dots,J_k\}$.}
\setcounter{enumi}{-1}
\indentitem{If ${\cal J}(v)$ contains long jobs, rearrange the enumeration of jobs to comply with the following conditions:
\begin{enumerate}
	\item $J_1$ is a long job,
	\item $J_2$ is also a long job unless there are no more long jobs except for $J_1$.
\end{enumerate}}
\indentitem{Find minimal $x>1$ such that $\sum\limits_{j=1}^xd_j>\frac12(\bar R-2\dist(v_0,v)+d_{\max}(v))$.\newline
{\bfseries If} no such $x$ exists, set $x=k$.\newline 
Set ${\cal J}_1=\{J_1,\dots,J_x\}$ and $X=\sum\limits_{j=1}^xd_j$.}\newline
{\bfseries If} $x=k$, set ${\cal J}_2={\cal J}_3=\emptyset$ and {\bfseries STOP}.
\indentitem{Find minimal $y>x$ such that $\sum\limits_{j=x+1}^yd_j>\bar R-2\dist(v_0,v)+d_{\max}(v)-X$.\newline 
{\bfseries If} no such $y$ exists, set $y=k$.\newline
Set ${\cal J}_2=\{J_{x+1},\dots,J_y\}$.}\newline
{\bfseries If} $y=k$, set ${\cal J}_3=\emptyset$ and {\bfseries STOP}.
\indentitem{Set ${\cal J}_3=\{J_{y+1},\dots,J_k\}$.}
\end{procedure}

Clearly, each step of the procedure requires $O(n)$ time.
However, without the conditions of Theorem \ref{theo:superoverloaded} we cannot guarantee that the partition obtained by the procedure {\tt Partition 2.0} is irreducible. Note how we use this procedure in our algorithm in the next Section.

\section{Instance reduction procedure}\label{sec:reduction}
In this section we study some general properties of an instance of \(RO2||R_{\max}\) and describe the reduction procedure which helps to reduce the number of jobs and to simplify the graph structure preserving the standard lower bound \(\bar R\). One of the important properties of the procedure is its \emph{reversibility}: any feasible schedule for a reduced instance can be treated as a feasible schedule for the initial instance with the same makespan. In general case this procedure can increase the optimal makespan. However, in the next sections we prove that for our special cases of \(RO2|G=tree|R_{\max}\) the instance reduction procedure also preserves the optimum. Therefore, it can be used as a main part of an exact algorithm for solving the initial instance.

The procedure is based on two types of instance transformation: \emph{job aggregation} and \emph{terminal edge contraction}. The first one is described in detail in \cite{ChLgot}, while the second was used in \cite{Ch} for a certain generalization of the routing open shop problem. We provide all the necessary details below.

\begin{definition}Let \(I\) be an instance of the problem \(ROm||R_{\max}\) with graph \(G=\langle V;E\rangle\), and \({\cal K}\subseteq{\cal J}(I;v)\) for some \(v\in V\). Then we say that instance \(I'\) is obtained from \(I\) by \emph{aggregation} of jobs from \({\cal K}\) if
\[{\cal J}(I';v)\doteq{\cal J}(I;v)\setminus{\cal K}\cup\{J_{j_{\cal K}}\},\,\,\forall i=1,\ldots,m\,\,p_{j_{\cal K}i}(I')\doteq\sum_{J_j\in{\cal K}}p_{ji}(I),\]
\[\forall u\ne v\,\,{\cal J}(I';u)={\cal J}(I;u).\]
(Here $j_{\cal K}$ is some new job index. A job \(J_{j_{\cal K}}\) is to replace the set of jobs \({\cal K}\).)
An instance \(\tilde I\) obtained from \(I\) by a series of job aggregations is referred to as an \emph{aggregation} of \(I\).
\end{definition}

The idea behind the job aggregation is easy: to partition jobs into some number of groups, and treat each group as a new job with processing times equal to the total processing times of the jobs combined. Similar approach is used in de Werra's algorithm for the $O2||C_{\max}$ problem \cite{dW89}.

It is easy to observe that any feasible schedule for any aggregation \(\tilde I\) can be treated as a feasible schedule for the initial instance \(I\): one just needs to replace an aggregated operation with a sequence of operations of jobs from \({\cal K}\) to be processed in any order with no idle time. Therefore, \(R^*_{\max}(\tilde I)\geqslant R^*_{\max}(I)\). Also as soon as we obtained a new job \(J_{j_{\cal K}}\) in \({\cal J}(I';v)\), it is possible that \(d_{j_{\cal K}}>d_{\max}(I;v)\), so job aggregation can lead to the growth of the standard lower bound. Specifically, (\ref{eq:SLB}) implies 
\begin{equation}\label{eq:overR}\bar R(I')>\bar R(I)\text{ if and only if }d_{j_{\cal K}}>\bar R(I)-2\dist(v_0,v).\end{equation}
We use job aggregation to simplify the instance preserving the standard lower bound. Such an aggregation is referred to as a \emph{valid} one. An instance with no further legal job aggregation possible is called \emph{irreducible}.

A natural question arises, if it is possible to perform a valid job aggregation of a whole set \({\cal J}(I;v)\) for some \(v\in V\).
To answer that question, we use the following definition from \cite{ChLgot}.

\begin{definition}\label{def:overnode}A node \(v\in V\) of an instance \(I\) of the problem \(ROm||R_{\max}\) is referred to as \emph{overloaded} if
\[\Delta(I;v)>\bar R(I)-2\dist(I;v_0,v).\]
Otherwise the node is called \emph{underloaded}.
\end{definition}
Note that Definition \ref{def:superoverloaded} implies $\Delta(v)>\frac32\bigl(\bar R-2\dist(v_0,v)\bigr)$, therefore any superoverloaded node is overloaded as well.

The job aggregation of the set \({\cal J}(I;v)\) is valid if and only if the node \(v\) is underloaded. Therefore, any node containing single job is an underloaded one.

By \(\overnodes{I}\) we denote the number of overloaded nodes in an instance \(I\).
It was proved in \cite{ChLgot} that for every instance \(I\) of the \(RO2||R_{\max}\) problem \(\overnodes{I}\leqslant 1\). Further in this section we prove a more general result (Proposition \ref{prop:overloaded}).

Now let us describe the terminal edge contraction operation. 
\begin{definition}Let \(v\in V\setminus\{v_0\}\) be some terminal node in graph \(G\), containing a single job \(J_j\) in an instance \(I\) of the \(ROm||R_{\max}\) problem. Let \(e=[v,u]\in E\) be the edge incident to \(v\). By the \emph{contraction} of the edge \(e\) we understand the following instance transformation:
\[{\cal J}(I';u)\doteq{\cal J}(I;u)\cup\{J_{j}\};\,p_{ji}(I')\doteq p_{ji}(I)+2\tau(e);\,G(I')\doteq G(I)\setminus\{v\}.\]
\end{definition}
In other words, job $J_j$ is translated from $v$ to $u$, while its operations' processing times increase by $2\tau(e)$ each. After that translation node $v$ is obsolete (contains no jobs) and to be removed from $G$.

Consider an instance \(I'\) obtained from \(I\) by the contraction of edge \(e\). Any feasible schedule for \(I'\) can be treated as a feasible schedule for the initial instance \(I\). One just needs to replace a scheduled interval of a new operation \(O_{ji}\) with three consecutive intervals: traveling of the machine \(M_i\) over the edge \(e\) to the node \(v\), performing of the old operation \(O_{ji}\), and traveling back to the node \(u\). 

Note that an edge contraction increases each machine load be $2\tau(e)$ while decreasing $T^*$ by the same amount, therefore preserving the sum $\ell_{\max}+T^*$. However the length of $J_j$ increases by $2m\tau(e)$ which might lead to the growth of the standard lower bound. We want to avoid that.
Consider the two-machine case of our problem. The following definition describes the exact condition, under which an edge contraction increases $\bar R$.
\begin{definition}\label{def:overedge} Let \(v\in V\setminus\{v_0\}\) be some terminal node in graph \(G\), containing a single job \(J_j\) in instance \(I\) of the \(RO2||R_{\max}\) problem. Let \(e=[v,u]\in E\) be the edge incident to \(v\). The edge \(e\) is referred to as \emph{overloaded} if
\begin{equation}\label{eq:overedge}d_j(I)+4\tau(e)>\bar R(I)-2\dist(I;v_0,u),\end{equation}
and \emph{underloaded} otherwise.
\end{definition}
(Note that we {\em could} perform a contraction of an overloaded edge --- meaning that the edge is terminal, the respective terminal node is not the depot and contains a single job --- but this would increase the standard lower bound. In the case the edge contraction cannot be performed, the edge is neither overloaded nor underloaded.)

For any problem instance \(I\), we denote the number of overloaded edges by \(\overedges{I}\).
The following property of any instance of \(RO2||R_{\max}\) is fundamental for the procedure of instance reduction.

\begin{proposition}\label{prop:overloaded}
Let \(I\) be an instance of the problem \(RO2||R_{\max}\). Then \(\overnodes{I}+\overedges{I}\leqslant 1\).
\end{proposition}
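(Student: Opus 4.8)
The plan is to argue by contradiction, assuming $\overnodes{I}+\overedges{I}\geqslant2$, and to derive a violation of the inequality $\bar R\geqslant\ell_{\max}+T^*$ built into the standard lower bound (\ref{eq:SLB}). First I would recast both overload conditions into a single form. For an overloaded node $v$, Definition \ref{def:overnode} reads $\Delta(v)+2\dist(v_0,v)>\bar R$. For an overloaded edge $e=[v,u]$ with terminal node $v$ carrying its single job $J_j$, I use the fact that $v$ is terminal to write $\dist(v_0,v)=\dist(v_0,u)+\tau(e)$, which turns (\ref{eq:overedge}) into $d_j+2\tau(e)+2\dist(v_0,v)>\bar R$. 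Thus to every overloaded object I attach an \emph{anchor node} $z$ (the node itself, or the terminal node of the edge), a \emph{job-load} $D$ (equal to $\Delta(v)$, respectively $d_j$), and a \emph{penalty} $\pi$ (equal to $0$, respectively $\tau(e)$), so that uniformly $D+2\pi+2\dist(v_0,z)>\bar R$.

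Next I would pick two overloaded objects with data $(z_k,D_k,\pi_k)$, $k=1,2$, and observe that their associated job sets are disjoint. Indeed, the terminal node carrying a single job is underloaded (a single-job node always is), so it never coincides with an overloaded node; and two distinct terminal nodes carry distinct jobs. Hence $\ell_1+\ell_2=\sum_j d_j\geqslant D_1+D_2$, giving $\ell_{\max}\geqslant\tfrac12(D_1+D_2)$. Summing the two overload inequalities yields $2\bar R<(D_1+D_2)+2\pi_1+2\pi_2+2\dist(v_0,z_1)+2\dist(v_0,z_2)$, while $\bar R\geqslant\ell_{\max}+T^*$ gives $2\bar R\geqslant(D_1+D_2)+2T^*$. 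Subtracting $D_1+D_2$, the whole statement reduces to the single geometric inequality
\[T^*\geqslant\dist(v_0,z_1)+\pi_1+\dist(v_0,z_2)+\pi_2,\]
which I would then establish for all three cases at once.

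The proof of this $T^*$ bound rests on two observations about the optimal cyclic route $W$ from $v_0$. The basic one: any closed walk from $v_0$ visiting two nodes $a,b$ has length at least $\dist(v_0,a)+\dist(v_0,b)$, seen by splitting $W$ at a first visit to $a$ and a last visit to $b$ (or symmetrically) into two non-overlapping portions and applying the triangle inequality to each. When $z_k$ is the terminal node of an overloaded edge, $W$ can reach $z_k$ only through $e_k$, so it performs at least one excursion $u_k\to z_k\to u_k$ of length $2\tau(e_k)=2\pi_k$; shortcutting every such excursion produces a shorter closed walk $W'$ from $v_0$ that still visits the neighbor $u_k$, so $T^*\geqslant\mathrm{len}(W')+2\pi_1+2\pi_2\geqslant\dist(v_0,u_1)+\dist(v_0,u_2)+2\pi_1+2\pi_2$. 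Using $\dist(v_0,u_k)+2\tau(e_k)=\dist(v_0,z_k)+\pi_k$ converts this into exactly the required form; for an overloaded node the penalty $\pi_k=0$ and the basic observation alone suffices.

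I expect the $T^*$ estimate to be the only real obstacle — the rest is bookkeeping. The delicate point is the shortcutting argument, which must yield the full $2\tau(e)$ per terminal edge and must remain valid when the two anchors coincide or lie on a common $v_0$-path; the bound survives because it relies only on shortest-path distances and the triangle inequality rather than on any disjointness of the routes. I would also double-check the degenerate configurations (e.g. an anchor equal to the depot, where $\dist(v_0,z)=0$) to confirm the inequalities stay correct, and verify once more that a single-job terminal node cannot be double-counted as an overloaded node, so that ``two overloaded objects'' always supplies two disjoint job-loads.
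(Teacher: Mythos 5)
Your proof is correct, and the steps all check out: the recast condition $d_j+2\tau(e)+2\dist(v_0,v)>\bar R$ does follow from $\dist(v_0,v)=\dist(v_0,u)+\tau(e)$ for a terminal $v$; the two job-loads are disjoint in all three pairings (in particular a single-job terminal node is underloaded by the second term of \eqref{eq:SLB}, so it cannot double as the overloaded node); and your shortcutting bound $T^*\geqslant\dist(v_0,z_1)+\pi_1+\dist(v_0,z_2)+\pi_2$ survives the degenerate cases you flag, since two anchors can never coincide and $u_1=z_2$ would force $G$ to be a single edge not containing the depot. The underlying mechanism is the same as the paper's: both arguments contradict the total-load bound \eqref{eq:totalload} by summing two overload inequalities against a metric lower bound on $T^*$ that charges $2\tau(e)$ per overloaded terminal edge, and your excursion-shortcutting is in effect a proof of the inequality direction of the paper's \eqref{eq:terminaledges} combined with \eqref{eq:metricprime}. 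Where you genuinely differ is in the decomposition: the paper treats the (edge, edge) and (node, edge) pairs separately and imports the (node, node) case $\overnodes{I}\leqslant1$ from \cite{ChLgot}, whereas your anchor/penalty normalization $D+2\pi+2\dist(v_0,z)>\bar R$ collapses all three pairings into one computation and makes the proposition self-contained, re-proving the cited fact; the price is only the bookkeeping (distinctness of anchors, full collection of the $2\tau(e)$ penalties) that you explicitly verify, and in exchange you replace the paper's appeal to the ``metric property'' with an explicit walk-splitting argument.
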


\begin{proof} As proved in \cite{ChLgot}, any instance of \(RO2||R_{\max}\) contains at most one overloaded node, so \(\overnodes{I}\leqslant 1\). Let us prove, that \(I\) contains at most one overloaded edge. Note that (\ref{eq:SLB}) implies 
\begin{equation}
\Delta(I)=\ell_1(I)+\ell_2(I)\leqslant 2(\bar R(I)-T^*(I)).
\label{eq:totalload}
\end{equation}
Let  \(v\) and \(v'\) be two different terminal nodes with single job in each, \(J_j\) and \(J_{j'}\) respectively; \(e=[u,v]\) and \(e'=[u',v']\) be the edges, incident to \(v\) and \(v'\), respectively, and both edges are overloaded. (Note that there is a possibility that \(u=u'\).) 
Due to the metric property of distances we have
\begin{equation}
T'^*\geqslant\dist(v_0,u)+\dist(v_0,u').
\label{eq:metricprime}
\end{equation}

From (\ref{eq:overedge}) we have
\[d_j+4\tau(e)>\bar R-2\dist(v_0,u);\,\,d_{j'}+4\tau(e')>\bar R-2\dist(v_0,u'),\]
and therefore
\begin{equation}
\Delta\geqslant d_j+d_{j'}>2\bar R-2\dist(v_0,u)-2\dist(v_0,u')-4\tau(e)-4\tau(e').
\label{eq:twooveredges}
\end{equation} 
Consider a graph \(G'=G\setminus\{v,v'\}\). Let \(T'^*\) be the optimum of the TSP on \(G'\). Then 
due to the fact that edges \(e\) and \(e'\) are terminal,
\begin{equation}
T^*=T'^*+2\tau(e)+2\tau(e').
\label{eq:terminaledges}
\end{equation}
Indeed, in order to visit terminal nodes, one needs to travel twice over their respective incident edges. 
Combining (\ref{eq:twooveredges}), (\ref{eq:metricprime}) and (\ref{eq:terminaledges}), we obtain the inequality
\[\Delta>2\bar R-2T'^*-4\tau(e)-4\tau(e')\geqslant 2\bar R-2T^*.\]
By contradiction with (\ref{eq:totalload}) we have \(\overedges{I}\leqslant 1\).

Now suppose \(\overnodes{I}=\overedges{I}=1\). Let \(e=[u,v]\) be the overloaded edge, \(v\ne v_0\) is terminal node with single job \(J_j\). Note that node \(v\) is underloaded as it contains a single job. Let \(v'\ne v\) be the overloaded node. Then, by Definitions \ref{def:overnode} and \ref{def:overedge} we have
\begin{equation}\label{eq:deltaovernode}\Delta(v')>\bar R-2\dist(v_0,v'),\end{equation}
\begin{equation}\label{eq:overedge2}d_j+4\tau(e)>\bar R-2\dist(v_0,u).\end{equation}
By using reasoning similar to that of (\ref{eq:metricprime}) and (\ref{eq:terminaledges}), we deduce
\(T^*\geqslant\dist(v_0,v')+\dist(v_0,u)+2\tau(e).\)
Using this inequality, together with (\ref{eq:deltaovernode}) and (\ref{eq:overedge2}), we obtain
\[\Delta\geqslant\Delta(v')+d_j>2\bar R-2\dist(v_0,v')-2\dist(v_0,u)-4\tau(e)\geqslant2\bar R-2T^*,\]
contradicting (\ref{eq:totalload}). This concludes the proof of the Proposition.
 \end{proof}

The idea of the following instance reduction procedure is simple. First, we aggregate jobs in all the underloaded nodes to obtain single job in each, then contract all the underloaded edges, and repeat this step until there is no underloaded edge and each underloaded node contains exactly one job. Second, we deal with the only overloaded node (if any), using the Procedure {\tt Partition 2.0} and aggregation of the obtained job sets. This way we transform the initial instance, preserving the standard lower bound. The instance obtained is irreducible. Moreover, any further operation of terminal edge contraction would increase the standard lower bound.

Note that the reduction procedure is not used for solving a problem instance, but to simplify it and consequently verify it's properties. Depending on the outcome of the procedure (see Lemma \ref{lem:reduction}) we further decide, whether the instance belong to the efficiently normal subclass we announced (see Theorem \ref{theo:scr0}). 
The simplification procedure is described in detail in Table \ref{table:reduction}. 

\begin{table}%
\begin{procedure}{\tt Reduction}{{\tt INPUT:} An instance \(I\) of the problem \(RO2||R_{\max}\).

{\tt OUTPUT:} A simplified irreducible instance \(\tilde I\).}
	\indentitem{{\bfseries For each} \emph{underloaded} \(v\in V\) perform the job aggregation of \({\cal J}(v)\).}
  \indentitem{{\bfseries For each} \emph{terminal node \(v\ne v_0\)}:
\begin{enumerate}
	\item \(e\doteq[u,v]\) (\(e\) is incident to \(v\), \(u\) is adjacent to \(v\)),
	\item {\bfseries If} \emph{\(e\) is underloaded} {\bfseries then}
	\begin{enumerate}
	  \item Let \(J_j\) is the only job in \({\cal J}(v)\),
		\item Perform the contraction of \(e\),
		\item {\bfseries If} \(u\) \emph{is underloaded} {\bfseries then} perform the job aggregation of \({\cal J}(u)\).
	\end{enumerate}
\end{enumerate}}
\indentitem{{\bfseries If} some $v\in V$ \emph{is overloaded} {\bfseries then}
\begin{enumerate}
	\item Obtain sets ${\cal J}_1,{\cal J}_2,{\cal J}_3$ by application of the {\tt Partition 2.0} to node $v$,
	\item \textbf{For each} non-empty set ${\cal J}_k$ perform the job aggregation of ${\cal J}_k$,
	\item {\bfseries If} an aggregation of two obtained jobs in ${\cal J}(v)$ with the smallest length is valid {\bfseries then} perform that aggregation.
\end{enumerate}}
\end{procedure}
\caption{The simplification procedure {\tt Reduction}.}
\label{table:reduction}
\end{table}

Let us illustrate this procedure on a small instance with transportation network from Figure \ref{fig:instance} and job data from Table \ref{table:instance}. Job data contains the description of jobs of type $J_j(a_j,b_j)$, with each node's load calculated in the last row. For this instance we have $T^*=28$ and $\ell_1=\ell_2=28$. As soon as each job length (and even node's load) is smaller than $\l_{\max}$, we have $\bar R=\ell_{\max}+T^*=56$. We can also observe that for each node $v$ we have $\Delta(v)+2\dist(v_0,v)<\bar R$, therefore all the node are underloaded. After performing Step 1 of the procedure we will obtain an instance with a single job at each node (see Table \ref{table:sampleStep1}). 

Let us perform Step 2. Consider terminal node $v_1$. The edge $[v_1,v_2]$ is underloaded as soon as $d_3+4\tau([v_1,v_2])+2\dist(v_2,v_0)=5+4+2=11<\bar R$. After the edge contraction (Step 2.2.2) we have modified job $J_3(6,3)$ translated to node $v_2$, so $\Delta(v_2)$ is now 15. The node $v_2$ is still underloaded, so we perform job aggregation (Step 2.2.3), see Table \ref{table:sampleStep21}. Further contractions of terminal edges is show in Table \ref{table:sampleStep2}. Terminal nodes are eliminated in order $v_2$, $v_3$, $v_5$, $v_7$, $v_8$, $v_6$. After the last contraction the node $v_4$ becomes overloaded, Step 3 does nothing and the procedure stops. The initial instance is reduced to a simple instance with two nodes $v_0$ and $v_4$ with three jobs.

\begin{figure}%
\begin{center}
	\begin{tikzpicture}
		\node[draw,circle,inner sep=1pt] (v1) at (0,0) {$v_1$};
		\node[draw,circle,inner sep=1pt] (v2) at (1,0) {$v_2$};
		\node[draw,circle,inner sep=1pt] (v0) at (2,0) {$v_0$};
		\node[draw,circle,inner sep=1pt] (v3) at (2.5,1) {$v_3$};
		\node[draw,circle,inner sep=1pt] (v4) at (3,0) {$v_4$};
		\node[draw,circle,inner sep=1pt] (v5) at (3.5,1) {$v_5$};
		\node[draw,circle,inner sep=1pt] (v6) at (4,0) {$v_6$};
		\node[draw,circle,inner sep=1pt] (v7) at (4.5,1) {$v_7$};
		\node[draw,circle,inner sep=1pt] (v8) at (5,0) {$v_8$};
		\draw (v1) -- (v2) node[pos=0.5,above] {1};
		\draw (v2) -- (v0) node[pos=0.5,above] {1};
		\draw (v0) -- (v3) node[pos=0.5,sloped,above] {2};
		\draw (v0) -- (v4) node[pos=0.5,above] {2};
		\draw (v4) -- (v5) node[pos=0.5,above,sloped] {3};
		\draw (v4) -- (v6) node[pos=0.5,above] {1};
		\draw (v6) -- (v7) node[pos=0.5,above,sloped] {2};
		\draw (v6) -- (v8) node[pos=0.5,above] {2};
	\end{tikzpicture}
\end{center}\caption{A sample transportation network.}%
\label{fig:instance}%
\end{figure}
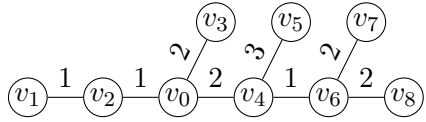

\begin{table}%
\tiny
\begin{tabular}{|c|p{7mm}|p{7mm}|p{7mm}|p{7mm}|p{7mm}|p{9mm}|p{9mm}|p{9mm}|p{9mm}|}
\hline
$v$:&$v_0$&$v_1$&$v_2$&$v_3$&$v_4$&$v_5$&$v_6$&$v_7$&$v_8$\\ \hline
Jobs:&$J_1(1,2)$ $J_2(3,4)$&$J_3(4,1)$&
$J_4(1,1)$ $J_5(1,1)$ $J_6(1,1)$&$J_7(2,1)$&
$J_8(5,2)$ $J_9(1,1)$&$J_{10}(2,1)$&$J_{11}(3,2)$&$J_{12}(1,4)$&
$J_{13}(2,3)$ $J_{14}(1,5)$\\ \hline
$\Delta(v)$:&10&5&6&3&9&3&5&5&11\\ \hline
\end{tabular}
\caption{Job data for a sample instance.}
\label{table:instance}
\end{table}

\begin{table}%
\tiny
\begin{tabular}{|c|c|c|c|c|c|c|c|c|c|}
\hline
$v$:&$v_0$&$v_1$&$v_2$&$v_3$&$v_4$&$v_5$&$v_6$&$v_7$&$v_8$\\ \hline
Jobs:&$J_{1,2}(4,6)$&$J_3(4,1)$&
$J_{4,5,6}(3,3)$&$J_7(2,1)$&
$J_{8,9}(6,3)$&$J_{10}(2,1)$&$J_{11}(3,2)$&$J_{12}(1,4)$&
$J_{13,14}(3,8)$\\ \hline
$\Delta(v)$:&10&5&6&3&9&3&5&5&11\\ \hline
\end{tabular}
\caption{Job data for a sample instance after Step 1.}
\label{table:sampleStep1}
\end{table}

\begin{table}%
\tiny
\begin{tabular}{|c|c|p{12mm}|c|c|c|c|c|c|}
\hline
$v$:&$v_0$&$v_2$&$v_3$&$v_4$&$v_5$&$v_6$&$v_7$&$v_8$\\ \hline
Jobs:&$J_{1,2}(4,6)$&
$J_3(6,3)$ $J_{4,5,6}(3,3)$&$J_7(2,1)$&
$J_{8,9}(6,3)$&$J_{10}(2,1)$&$J_{11}(3,2)$&$J_{12}(1,4)$&
$J_{13,14}(3,8)$\\ \hline
$\Delta(v)$:&10&15&3&9&3&5&5&11\\ \hline
\end{tabular}
\begin{tabular}{|c|c|c|c|c|c|c|c|c|}
\hline
$v$:&$v_0$&$v_2$&$v_3$&$v_4$&$v_5$&$v_6$&$v_7$&$v_8$\\ \hline
Jobs:&$J_{1,2}(4,6)$&
$J_{3,4,5,6}(9,6)$&$J_7(2,1)$&
$J_{8,9}(6,3)$&$J_{10}(2,1)$&$J_{11}(3,2)$&$J_{12}(1,4)$&
$J_{13,14}(3,8)$\\ \hline
$\Delta(v)$:&10&15&3&9&3&5&5&11\\ \hline
\end{tabular}
\caption{Performing Steps 2.2.2 and 2.2.3 for $v_1$.}
\label{table:sampleStep21}
\end{table}

\begin{table}%
\tiny
\begin{tabular}{|c|c|c|c|c|c|c|c|}
\hline
$v$:&$v_0$&$v_3$&$v_4$&$v_5$&$v_6$&$v_7$&$v_8$\\ \hline
Jobs:&$J_{1,2,3,4,5,6}(15,14)$&$J_7(2,1)$&
$J_{8,9}(6,3)$&$J_{10}(2,1)$&$J_{11}(3,2)$&$J_{12}(1,4)$&
$J_{13,14}(3,8)$\\ \hline
$\Delta(v)$:&29&3&9&3&5&5&11\\ \hline
\end{tabular}
\begin{tabular}{|c|c|c|c|c|c|c|}
\hline
$v$:&$v_0$&$v_4$&$v_5$&$v_6$&$v_7$&$v_8$\\ \hline
Jobs:&$J_{1,2,3,4,5,6,7}(21,19)$&
$J_{8,9}(6,3)$&$J_{10}(2,1)$&$J_{11}(3,2)$&$J_{12}(1,4)$&
$J_{13,14}(3,8)$\\ \hline
$\Delta(v)$:&40&9&3&5&5&11\\ \hline
\end{tabular}
\begin{tabular}{|c|c|c|c|c|c|}
\hline
$v$:&$v_0$&$v_4$&$v_6$&$v_7$&$v_8$\\ \hline
Jobs:&$J_{1,2,3,4,5,6,7}(21,19)$&
$J_{8,9,10}(14,10)$&$J_{11}(3,2)$&$J_{12}(1,4)$&
$J_{13,14}(3,8)$\\ \hline
$\Delta(v)$:&40&24&5&5&11\\ \hline
\end{tabular}
\begin{tabular}{|c|c|c|c|c|}
\hline
$v$:&$v_0$&$v_4$&$v_6$&$v_8$\\ \hline
Jobs:&$J_{1,2,3,4,5,6,7}(21,19)$&
$J_{8,9,10}(14,10)$&$J_{11,12}(8,10)$&
$J_{13,14}(3,8)$\\ \hline
$\Delta(v)$:&40&24&18&11\\ \hline
\end{tabular}
\begin{tabular}{|c|c|c|c|}
\hline
$v$:&$v_0$&$v_4$&$v_6$\\ \hline
Jobs:&$J_{1,2,3,4,5,6,7}(21,19)$&
$J_{8,9,10}(14,10)$&$J_{11,12,13,14}(15,22)$\\ \hline
$\Delta(v)$:&40&24&37\\ \hline
\end{tabular}
\newline
\begin{tabular}{|c|c|p{2cm}|}
\hline
$v$:&$v_0$&$v_4$\\ \hline
Jobs:&$J_{1,2,3,4,5,6,7}(21,19)$&
$J_{8,9,10}(14,10)$ $J_{11,12,13,14}(17,24)$\\ \hline
$\Delta(v)$:&40&65\\ \hline
\end{tabular}
\caption{Performing Steps 2.2.2 and 2.2.3 for nodes $v_2$, $v_3$, $v_5$, $v_7$, $v_8$, $v_6$.}
\label{table:sampleStep2}
\end{table}

Note that the procedure {\tt Reduction} obtains a reversible instance in \(O(n)\) time. Indeed, Step 1 requires $O(n)$ time. Step 2 is repeated once for each terminal node (and total number of nodes is not greater than $n$), and takes constant amount of time for a single edge contraction. Step 3 is also linear, as it's running time is majored by that of the procedure {\tt Partition 2.0}.

The following Lemma describes all possible variants of the reduced instance for the problem \(RO2|G=tree|R_{\max}\).

\begin{lemma}\label{lem:reduction} Let \(I\) be an instance of \(RO2|G=tree|R_{\max}\) and \(\tilde I\) is obtained from \(I\) by the procedure {\tt Reduction}. Then \(\bar R(\tilde I)=\bar R(I)\) and the graph \(G(\tilde I)\) satisfies exactly one of the following conditions:
\begin{enumerate}
	\item \(G(\tilde I)\) has a single node \(v_0\);
	\item \(G(\tilde I)\) is a chain connecting \(v_0\) with an overloaded node \(v\) and each node contains only one job except \(v\) which contains two or three jobs;
	\item \(G(\tilde I)\) is a chain connecting \(v_0\) with a node \(v\) with single job at each node, and the edge incident to \(v\) is overloaded.
\end{enumerate}
\end{lemma}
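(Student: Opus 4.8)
The plan is to separate the statement into two independent claims: the invariance $\bar R(\tilde I)=\bar R(I)$ and the trichotomy for $G(\tilde I)$. For the invariance I would track the two terms of the standard lower bound (\ref{eq:SLB}) separately. The quantity $\ell_{\max}+T^*$ is an invariant of every elementary transformation used by {\tt Reduction}: job aggregation changes neither the machine loads nor $T^*$, and a terminal edge contraction raises each load by $2\tau(e)$ while lowering $T^*$ by exactly $2\tau(e)$, which is precisely (\ref{eq:terminaledges}). The second term $\max_v\bigl(d_{\max}(v)+2\dist(v_0,v)\bigr)$ can only weakly grow, since aggregation and contraction merge and enlarge jobs; the point is to show it never exceeds $\bar R(I)$. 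For a valid aggregation of an underloaded node this is Definition \ref{def:overnode} together with (\ref{eq:overR}); for the contraction of an underloaded edge it is exactly the underloaded inequality of Definition \ref{def:overedge}; and for the Step~3 aggregations it is the correctness of {\tt Partition 2.0}, which guarantees every aggregated job has length at most $\bar R-2\dist(v_0,v)$. A short case split on which term realizes $\bar R(I)$ then upgrades ``never exceeds'' to the exact equality $\bar R(\tilde I)=\bar R(I)$.

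Because $\bar R$ is preserved step by step, Proposition \ref{prop:overloaded} applies not only to $\tilde I$ but to every intermediate instance $I'$ arising in the process, furnishing the budget $\overnodes{I'}+\overedges{I'}\leqslant1$ throughout; this is the engine of the structural part. First I would record the exit condition of Steps 1--2, read as iterated (contracting a leaf may expose a new one): on termination every underloaded node carries a single job, and no terminal node $v\neq v_0$ has an underloaded incident edge. Hence every leaf of $G(\tilde I)$ other than $v_0$ is an \emph{obstruction} --- either an overloaded node, which by Step~1 and the re-aggregation in Step~2 necessarily holds at least two jobs and so blocks contraction, or a single-job leaf whose incident contractible edge is therefore overloaded. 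The budget then says there is at most one obstruction.

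The geometric conclusion is immediate: a tree on at least two nodes has at least two leaves, all but possibly $v_0$ are obstructions, so there are at most two leaves and one of them is $v_0$; a tree with exactly two leaves, one a prescribed endpoint, is a chain starting at $v_0$. (If $v_0$ were not a leaf we would need two distinct obstructions, contradicting the budget; the degenerate one-node outcome is Case~1, and an overloaded depot is forced into it since all other leaves then contract away.) The unique obstruction selects the case: no obstruction gives the single node $v_0$ (Case~1), an overloaded endpoint node gives Case~2, and an overloaded terminal edge gives Case~3. The ``exactly one'' is again $\overnodes{\tilde I}+\overedges{\tilde I}\leqslant1$, which forbids an overloaded node and an overloaded edge simultaneously, and forces the overloaded node to \emph{be} the far endpoint rather than an interior vertex. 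For the job count in Case~2 I would note that {\tt Partition 2.0} cannot return a single nonempty part at an overloaded node, since aggregating the whole node would create a job of length $\Delta(v)>\bar R-2\dist(v_0,v)$ and violate correctness; thus it yields two or three aggregated jobs, and the optional merge in Step~3 leaves two or three.

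I expect two places to need the most care. The first is the bookkeeping that obstructions neither multiply nor vanish during the iterated contractions: here one leans entirely on applying Proposition \ref{prop:overloaded} at each stage --- legitimate only because $\bar R$ is invariant --- together with the fact that contracting a leaf leaves distances among the surviving nodes unchanged, so a previously overloaded edge stays overloaded and no new obstruction is created alongside an existing one. The second, more delicate point is the verification that the Step~3 aggregations produced by {\tt Partition 2.0} are genuinely valid (each part of length at most $\bar R-2\dist(v_0,v)$), which is the computation underlying the $\bar R$-invariance at the overloaded node and the source of the precise ``two or three jobs'' count.
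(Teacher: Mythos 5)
Your proposal follows essentially the same route as the paper's proof: $\bar R$ is preserved because every aggregation performed is valid and only underloaded edges are contracted, and the trichotomy follows by reading each leaf of $G(\tilde I)$ other than $v_0$ as an obstruction (an overloaded node, which necessarily holds more than one job, or a single-job leaf whose incident terminal edge is overloaded) and invoking the budget $\overnodes{\tilde I}+\overedges{\tilde I}\leqslant 1$ of Proposition \ref{prop:overloaded} to conclude that the tree has at most two leaves, hence is a chain with $v_0$ at one end and the unique obstruction at the other. The paper applies the budget only once, to the final instance $\tilde I$, which already suffices; your stepwise application to intermediate instances (legitimate, since $\bar R$ is invariant throughout) is harmless extra bookkeeping, and your case split for the exact equality $\bar R(\tilde I)=\bar R(I)$ is a more careful rendering of what the paper states in one line.

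There is, however, one concretely false claim in your treatment of the job count for case 2: \texttt{Partition 2.0} \emph{can} return a single nonempty part at an overloaded node. Whenever ${\cal J}(v)$ consists of exactly two jobs, overloadedness gives $d_1+d_2=\Delta(v)>\bar R-2\dist(v_0,v)\geqslant\frac12\bigl(\bar R-2\dist(v_0,v)+d_{\max}(v)\bigr)$ (the last inequality because $d_{\max}(v)\leqslant\bar R-2\dist(v_0,v)$ by (\ref{eq:SLB})), so Step 1 terminates with $x=k=2$ and ${\cal J}_2={\cal J}_3=\emptyset$; this is precisely what happens at node $v_4$ in the paper's own worked example, where ``Step 3 does nothing.'' What preserves the count is not the number of parts returned but the convention, implicit in the paper, that aggregations are only performed when valid: merging all of an overloaded ${\cal J}(v)$ would raise $\bar R$ by (\ref{eq:overR}) and is therefore skipped. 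Your parenthetical reason (``aggregating the whole node would violate correctness'') is the right observation attached to the wrong conclusion. To be fair, the paper's own proof of Lemma \ref{lem:reduction} never argues the two-or-three count at all --- it establishes only the chain structure --- so you attempted more than the source does; but note the count is genuinely delicate: with several short jobs in an unfavorable enumeration (e.g.\ lengths $2,2,2,2,5$ at a node with $\bar R-2\dist(v_0,v)=12$, none of them ``long''), Step 1 of \texttt{Partition 2.0} sets $x=k$ with $k\geqslant4$, and then neither your argument nor the valid-only reading of Step 3 (whose Step 3.3 performs at most one further merge) yields at most three jobs, so closing this case rigorously requires more than either your proof or the paper provides.
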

\begin{proof}
Each job aggregation used in the Procedure is valid and, therefore, does not grow the standard lower bound. Terminal edge contractions are applied only to underloaded edges, therefore, \(\bar R(\tilde I)=\bar R(I)\).

Consider the case \(G(\tilde I)\ne K_1\).
Note that Steps 1 and 2.2.3 guarantee that each underloaded node in \(\tilde I\) contains exactly one job. Therefore, each terminal node in \(G(\tilde I)\) is either \(v_0\), or overloaded, or incident to an overloaded edge. By Proposition \ref{prop:overloaded} 	the graph \(G(\tilde I)\) contains at most two terminal nodes (one of which is the depot) and hence is a chain. Step 2 of the procedure continues until we have no more underloaded terminal edges. Therefore, a terminal edge is contracted unless it is overloaded, or incident to the depot, or incident to an overloaded node. As soon as the first and the third options are mutually exclusive, the Lemma follows.
\end{proof}

As soon as the procedure {\tt Reduction} preserves $\bar R$, we have the following property: if the reduced instance $\tilde I$ is normal, then the initial instance $I$ is normal as well, and any normal (and hence optimal) schedule for $\tilde I$ can be easily transformed into an optimal schedule of $I$. Obviously $\tilde I$ is normal in case 1 of Lemma \ref{lem:reduction}: the problem is reduced to a classical $O2||C_{\max}$ and a normal schedule can be built by the Gonzalez-Sahni algorithm (Lemma \ref{lem:GoSa}). Therefore a class of instances of $RO2|G=tree|R_{\max}$, for which the procedure {\tt Reduction} contracts the initial tree into a single node is efficiently normal and can be solved in three steps: {\tt Reduction}, Gonzalez-Sahni algorithms and restoring a schedule for the initial instance. In the next Section we prove similar properties for case 3 and (under a certain condition) for case~2.

\section{Easy solvable cases on a chain}\label{sec:chaincases}
In this section we establish the normality of two special cases of $RO2|G=chain|R_{\max}$. In both cases we assume that the instance is irreducible, the depot is one of the ends of $G$, while the other end is either incident to an overloaded edge (which corresponds to a case 3 of Lemma \ref{lem:reduction}) or contains exactly three jobs (which is a special subcase of a case 2 of Lemma \ref{lem:reduction}). A trivial corollary of those results is Theorem \ref{theo:scr0} providing a formulation of efficiently normal subcases for the $RO2|G=tree|R_{\max}$ in terms of the outcome of the procedure {\tt Reduction} applied to an instance of the problem.

In this Section we construct early schedules using partial orders of the operations, according to the remark to the Definition \ref{def:earlyschedule}. Necessary partial orders are referred to as {\em schemes} and are described graphically. Auxiliary nodes $S$ and $F$  represent start and finish moments of a schedule.

\begin{lemma}\label{lem:overedge}Let \(I\) be an instance of \(RO2|G=chain|R_{\max}\), with \(G(I)\) being a chain \((v_0,\dots,v_g)\), \(g\geqslant 1\), each node \(v_p\) contains a single job \(J_p\) and the edge \([v_{g-1},v_g]\) is overloaded. Then one can build a normal schedule \(S\) for \(I\) in linear time.
\end{lemma}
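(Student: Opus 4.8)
The plan is to exhibit a single schedule that mimics the Gonzalez--Sahni structure with $J_g$ playing the role of the ``diagonal'' pivot at the far end of the chain, and to verify its normality directly. First I would record that, since $G(I)$ is a chain, the optimal tour reaches $v_g$ and returns, so $T^*(I)=2\dist(v_0,v_g)$. Writing $\tau_g\doteq\tau([v_{g-1},v_g])$, the overloadedness hypothesis unwinds (via $\dist(v_0,v_{g-1})=\dist(v_0,v_g)-\tau_g$) into
\[d_g+2\tau_g>\bar R-2\dist(v_0,v_g),\]
which, combined with the standard lower bound $\ell_{\max}\leqslant\bar R-T^*=\bar R-2\dist(v_0,v_g)$, yields the single key inequality $\ell_{\max}<d_g+2\tau_g$. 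This is the only place the overloadedness of the edge enters; everything else is bookkeeping.

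The scheme I would build is as follows. Machine $M_1$ leaves the depot and moves monotonically towards $v_g$, processing $a_0,a_1,\dots,a_{g-1}$ along the way without idle, then processes $a_g$ at $v_g$ and returns to $v_0$. Machine $M_2$ travels directly to $v_g$ processing nothing, processes $b_g$ immediately upon arrival, and then moves back towards the depot processing $b_{g-1},\dots,b_1,b_0$. For job $J_g$ this orders $b_g$ before $a_g$: the operation $b_g$ occupies $[\dist(v_0,v_g),\dist(v_0,v_g)+b_g]$, while $M_1$ can start $a_g$ only on arrival at $v_g$, at time $\dist(v_0,v_g)+\max(\ell_1-a_g,b_g)$, so the two never overlap. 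Hence $R_1=2\dist(v_0,v_g)+\max(\ell_1,d_g)$ and, provided $M_2$ incurs no idle on its return leg, $R_2=2\dist(v_0,v_g)+\ell_2$. Both are at most $\bar R$: the first because $2\dist(v_0,v_g)+\ell_1\leqslant\ell_{\max}+T^*\leqslant\bar R$ and $2\dist(v_0,v_g)+d_g\leqslant\bar R$ by the node term of (\ref{eq:SLB}); the second because $2\dist(v_0,v_g)+\ell_2\leqslant\ell_{\max}+T^*\leqslant\bar R$. Since $\bar R$ is a lower bound, the makespan equals $\bar R$ and the schedule is normal.

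The main obstacle is the feasibility of the intermediate jobs, that is, showing that $a_p$ and $b_p$ do not overlap for every $J_p$ with $0\leqslant p\leqslant g-1$, equivalently that $M_2$ never waits on its return leg. With both machines idle-free, $M_1$ completes $a_p$ at $\dist(v_0,v_p)+(a_0+\dots+a_p)$, whereas $M_2$ would start $b_p$ at $2\dist(v_0,v_g)-\dist(v_0,v_p)+(b_{p+1}+\dots+b_g)$, so the required inequality reduces to
\[(a_0+\dots+a_p)-(b_{p+1}+\dots+b_g)\leqslant 2\bigl(\dist(v_0,v_g)-\dist(v_0,v_p)\bigr).\]
I would bound the left-hand side above by $\ell_1-d_g$ (using $a_0+\dots+a_p\leqslant\ell_1-a_g$ and $b_{p+1}+\dots+b_g\geqslant b_g$) and the right-hand side below by $2\tau_g$ (since $p\leqslant g-1$ gives $\dist(v_0,v_p)\leqslant\dist(v_0,v_{g-1})=\dist(v_0,v_g)-\tau_g$). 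The key inequality $\ell_{\max}<d_g+2\tau_g$ then gives $\ell_1-d_g<2\tau_g$, closing the gap for all $p$ at once; note this also covers the depot job $J_0$ as the case $p=0$. Thus no waiting occurs, the makespan is exactly $\bar R$, and since the scheme is produced and evaluated by a single sweep along the chain, the construction runs in $O(n)$ time. I expect no separate case analysis to be required, the argument being indifferent to which operation of $J_g$ is the larger one.
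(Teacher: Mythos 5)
Your proposal is correct and builds exactly the schedule of the paper's proof (the scheme of Figure \ref{fig:SchemeOverEdge}): $M_1$ sweeps from the depot to $v_g$ processing $a_0,\dots,a_{g-1}$ and then $a_g$, while $M_2$ travels straight to $v_g$, processes $b_g$ first, and sweeps back processing $b_{g-1},\dots,b_0$. The only difference is in bookkeeping: the paper verifies normality via the critical-path formula, bounding the cross path $\sum_{j=0}^{g-1}d_j+2T<\bar R$ using the total-load inequality (\ref{eq:totalload}), whereas you rule out any idle time of $M_2$ node-by-node from $\ell_{\max}<d_g+2\tau_g$ --- the same consequence of the edge's overloadedness combined with the standard lower bound, applied pointwise rather than in aggregate.
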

\begin{proof}
Let \(T\doteq\dist(v_0,v_{g-1})\) and \(\mu\doteq\tau([v_{g-1},v_g])\). Then \(T^*=2(T+\mu)\).
As soon as the edge \([v_{g-1},v_g]\) is overloaded, we have
\[d_g+4\mu>\bar R-2T.\]
Therefore, (\ref{eq:totalload}) implies
\begin{equation}
\sum_{j=0}^{g-1}d_j+2T=\Delta-d_g+2T< 2\bar R-2T^*-\bar R+2T+4\mu+2T=\bar R.
\label{eq:longpathshort}
\end{equation}
Let \(S\) be the early schedule built according to the scheme from Figure \ref{fig:SchemeOverEdge}.
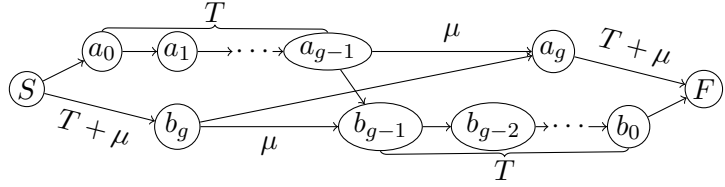
\begin{figure}%
\begin{center}
\begin{tikzpicture}
	\node[draw,circle,inner sep=1pt] (S) at (0,0) {\(S\)};
	\node[draw,circle,inner sep=1pt] (F) at (9,0) {\(F\)};
	\node[draw,circle,inner sep=1pt] (a0) at (1,0.5) {\(a_0\)};
	\node[draw,circle,inner sep=1pt] (a1) at (2,0.5) {\(a_1\)};
	\node[inner sep=1pt] (dots1) at (3,0.5) {\(\cdots\)};
	\node[draw,ellipse,inner sep=1pt] (ak1) at (4,0.5) {\(a_{g-1}\)};
	\node[draw,ellipse,inner sep=1pt] (bk1) at (4.7,-0.5) {\(b_{g-1}\)};
	\node[draw,ellipse,inner sep=1pt] (bk2) at (6.2,-0.5) {\(b_{g-2}\)};
	\node[inner sep=1pt] (dots2) at (7.2,-0.5) {\(\cdots\)};
	\node[draw,circle,inner sep=1pt] (b0) at (8,-0.5) {\(b_0\)};
	\node[draw,circle,inner sep=1pt] (bk) at (2,-0.5) {\(b_g\)};
	\node[draw,circle,inner sep=1pt] (ak) at (7,0.5) {\(a_g\)};
	\draw[->] (S) -- (a0);
	\draw[->] (a0) -- (a1);
	\draw[->] (a1) -- (dots1);
	\draw[->] (dots1) -- (ak1);
	\draw[->] (ak1) -- (ak) node[above,pos=0.5] {\(\mu\)};
	\draw[->] (ak) -- (F) node[above,pos=0.5,sloped] {\(T+\mu\)};
	\draw[->] (ak1) -- (bk1);
	\draw[->] (S) -- (bk) node[below,pos=0.5,sloped] {\(T+\mu\)};
	\draw[->] (bk) -- (bk1) node[below,pos=0.5] {\(\mu\)};
	\draw[->] (bk1) -- (bk2);
	\draw[->] (bk2) -- (dots2);
	\draw[->] (dots2) -- (b0);
	\draw[->] (b0) -- (F);
	\draw[->] (bk) -- (ak);
	
  \draw[decoration={brace},decorate] (a0.north) -- (ak1.north) node[pos=0.5,above] {\(T\)};	
	\draw[decoration={brace,mirror},decorate] (bk1.south) -- (b0.south) node[pos=0.5,below] {\(T\)};	
\end{tikzpicture}
\end{center}
\caption{A scheme of an optimal schedule for an instance with overloaded edge.}%
\label{fig:SchemeOverEdge}%
\end{figure}
Following a well-known fact from project planning, the makespan of the schedule \(S\) coincides with the length of a critical path in graph from Figure \ref{fig:SchemeOverEdge}:
\[R_{\max}(S)=\max\left\{\ell_1+T^*,\ell_2+T^*,d_g+2\dist(v_0,v_g),\sum_{j=0}^{g-1}d_j+2T\right\}.\]
From (\ref{eq:SLB}) and (\ref{eq:longpathshort}) we obtain \(R_{\max}(S)=\bar R(I),\)
concluding the proof.  \end{proof}


\begin{lemma}\label{lem:superovernode}Let \(I\) be an irreducible instance of \(RO2|G=chain|R_{\max}\), with \(G(I)\) being a chain \((v_0,\dots,v_g)\), \(g\geqslant 1\), and \(v_g\) contains three jobs, while each underloaded node \(v_p\) contains a single job \(J_p\), $p=0,\dots,g-1$. Then one can build a normal schedule \(S\) for \(I\) in linear time.
\end{lemma}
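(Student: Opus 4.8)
The plan is to mimic the proof of Lemma~\ref{lem:overedge}: exhibit an explicit early schedule through a scheme and identify its makespan with the length of a critical path, which I would then show equals $\bar R(I)$. Two preliminary facts drive the construction. First, I would argue that $v_g$ is \emph{superoverloaded} in the sense of Definition~\ref{def:superoverloaded}, with the three jobs of ${\cal J}(v_g)$ themselves forming an irreducible partition. Writing $D\doteq\bar R-2\dist(v_0,v_g)$, the standard lower bound (\ref{eq:SLB}) forces $d_j\leqslant D$ for each of the three jobs (condition~1), while irreducibility of $I$ means the final aggregation step of {\tt Reduction} failed to merge the two shortest of them; hence their lengths sum to more than $D$, and since the two shortest realise the minimal pairwise sum, every pair exceeds $D$ (condition~2). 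Second, summing the three pairwise inequalities gives $\Delta(v_g)>\tfrac32 D$, so by (\ref{eq:totalload}) the remaining jobs are collectively small:
\[\sum_{p=0}^{g-1}d_p=\Delta-\Delta(v_g)<2D-\tfrac32 D=\tfrac12 D.\]

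Next I would build the schedule. Let $T\doteq\dist(v_0,v_g)$, so $T^*=2T$. Machine $M_1$ travels out to $v_g$ processing the path operations $a_0,\dots,a_{g-1}$ on the way, performs the three $v_g$-operations $a_x,a_y,a_z$ there, and returns; machine $M_2$ travels directly to $v_g$, performs $b_x,b_y,b_z$, and then processes the path operations $b_{g-1},\dots,b_0$ on the way back. The three operations at $v_g$ are ordered according to the superoverloaded construction of \cite{ChPya}, and I would link the two operations of every path job by a precedence arc $a_p\to b_p$ (A before B) and the two operations of every $v_g$-job by the arc dictated by that construction, exactly as the arcs $a_{g-1}\to b_{g-1}$ and $b_g\to a_g$ appear in Figure~\ref{fig:SchemeOverEdge}. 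This yields a feasible early schedule, since each job's two operations are separated by a precedence arc.

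Finally I would read off the makespan as the longest path in the scheme and check every candidate against $\bar R$. The two pure-machine paths contribute $\ell_1+2T$ and $\ell_2+2T$, both at most $\bar R$ by (\ref{eq:SLB}); the paths internal to the $v_g$-block are bounded by $\bar R$ precisely because $v_g$ is superoverloaded, via the arrangement underlying \cite{ChPya}; and the ``bypass'' path that enters through the path-$A$ operations, crosses at $v_{g-1}$, and leaves through the path-$B$ operations has length at most $\sum_{p<g}d_p+2\dist(v_0,v_{g-1})<\tfrac12 D+2T<\bar R$, by the smallness estimate above, mirroring inequality (\ref{eq:longpathshort}). Since $\bar R$ is a lower bound, equality follows and the schedule is normal; the construction is clearly linear-time.

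The main obstacle I anticipate lies in the middle block: I must choose the processing order of the three $v_g$-jobs and the placement of the path operations so that inserting travel times and reconciling the two machines' \emph{different arrival times} at $v_g$ (here $M_1$ arrives later than $M_2$ by $\sum_{p<g}a_p$) introduces no idle interval that would push a critical path beyond $\bar R$. The delicate case is when the $A$- and $B$-loads at $v_g$ are badly unbalanced, so that one machine has little $v_g$-work to overlap with the other's travel; here the smallness bound $\sum_{p<g}d_p<\tfrac12 D$ together with a careful, possibly case-split, assignment of the path operations to the two travel directions (as in the two-figure argument of Theorem~\ref{theo:kononovextended}) is what must be leveraged to close the estimate.
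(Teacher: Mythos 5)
Your preliminaries are sound and essentially match the paper's: irreducibility does give that every pair of jobs at $v_g$ has total length exceeding $\bar R-T^*$, and hence the path jobs are collectively small (the paper's version of this is inequality \eqref{eq:noncritical}); the outer skeleton --- one machine sweeping the $a$-operations along the chain, the other sweeping the $b$-operations in the opposite direction, with the makespan read off a critical path in a scheme --- is also exactly the paper's. The genuine gap sits at precisely the point you flag yourself: the ordering of the six operations at $v_g$ and the orientation of the job-precedence arcs there. You defer this to ``the arrangement underlying \cite{ChPya}'', but that construction is for $G=K_2$, where both machines reach the non-depot node at the same time; in the chain setting the sweeping machine arrives later by the total of the path operations it carries, and no single fixed arrangement is shown to absorb this asymmetry. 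Moreover, your claim that ``paths internal to the $v_g$-block are bounded by $\bar R$ precisely because $v_g$ is superoverloaded'' is false for any one fixed scheme: in the paper's first scheme (Figure \ref{fig:SchemeOverNode}) the block path of length $T^*+a_\alpha+a_\beta+b_\beta$ can exceed $\bar R$ even though $v_g$ is superoverloaded.

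The paper closes this gap with a device your proposal lacks. After normalizing via \eqref{eq:assumption} so that $a_\alpha\leqslant\min\{a_\gamma,b_\gamma,b_\alpha\}$, it builds \emph{two} candidate schedules $S_1$ and $S_2$ (Figures \ref{fig:SchemeOverNode} and \ref{fig:SchemeOverNode2}), differing only in where $a_\alpha$ is placed in $M_1$'s sequence and in the direction of the precedence arc between $a_\alpha$ and $b_\alpha$, and then argues by contradiction: if neither schedule were normal, their only possible critical-path values, \eqref{eq:R1} and the bound \eqref{eq:R2}, would sum to $2T^*+\sum_{j=0}^{g-1}d_j+d_\alpha+d_\beta+d_\gamma=2T^*+\Delta\leqslant 2\bar R$ by \eqref{eq:totalload} --- impossible if both exceed $\bar R$. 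So at least one of the two schedules is normal, and which one can be determined in linear time. This two-schedule averaging argument is exactly the ``careful, possibly case-split, assignment'' you anticipate but do not supply; without it (or an equivalent mechanism) the single-schedule plan does not close, so the proposal as written is incomplete.
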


\begin{proof}
Let \({\cal J}(v_g)=\{J_{\alpha},J_{\beta},J_{\gamma}\}\). Let \(T\doteq\dist(v_0,v_g)\), then we have \(T^*=2T\).
Without loss of generality we may assume 
\begin{equation}\label{eq:assumption}a_{\alpha}\leqslant\min\{a_{\gamma},b_{\gamma},b_{\alpha}\}
\end{equation} (this can be achieved by renumeration of machines and/or jobs \(J_{\alpha},J_{\gamma}\)).

As soon as $I$ is irreducible we have
\begin{equation}
d_{\alpha}+d_{\beta}>\bar R-2\dist(v_0,v_g)=\bar R-T^*.
\label{eq:alphabeta}
\end{equation}

Together (\ref{eq:totalload}) and (\ref{eq:alphabeta}) imply
\begin{equation}
\sum_{j=0}^{g-1}d_j+d_{\gamma}+T^*<\bar R.
\label{eq:noncritical}
\end{equation}

Consider the early schedules \(S_1\) and \(S_2\) built according to the schemes from Figures \ref{fig:SchemeOverNode} and \ref{fig:SchemeOverNode2}, accordingly.

\begin{figure}%
\begin{center}
\begin{tikzpicture}
	\node[draw,circle,inner sep=1pt] (S) at (0,0) {\(S\)};
	\node[draw,circle,inner sep=1pt] (F) at (9.5,0) {\(F\)};
	\node[draw,circle,inner sep=1pt] (a0) at (8.5,0.5) {\(a_0\)};
	\node[inner sep=1pt] (dotsa1) at (7.7,0.5) {\(\cdots\)};
	\node[draw,ellipse,inner sep=1pt] (ak1) at (6.5,0.5) {\(a_{g-1}\)};
	\node[draw,ellipse,inner sep=1pt] (ag) at (5.3,0.5) {\(a_{\gamma}\)};
	
	\node[draw,circle,inner sep=1pt] (aa) at (3,0.5) {\(a_{\alpha}\)};
	\node[draw,circle,inner sep=1pt] (ab) at (4,0.5) {\(a_{\beta}\)};
	
	\node[draw,circle,inner sep=1pt] (ba) at (5.5,-0.5) {\(b_{\alpha}\)};
	\node[draw,circle,inner sep=1pt] (bb) at (6.5,-0.5) {\(b_{\beta}\)};
	
	\node[draw,ellipse,inner sep=1pt] (bg) at (4.3,-0.5) {\(b_{\gamma}\)};
	\node[draw,ellipse,inner sep=1pt] (bk1) at (3.2,-0.5) {\(b_{g-1}\)};
	\node[inner sep=1pt] (dotsb1) at (2,-0.5) {\(\cdots\)};
	\node[draw,circle,inner sep=1pt] (b0) at (1,-0.5) {\(b_0\)};
	\draw (S) edge[->,bend left=15] node[pos=0.5,above] {$T$} (aa);
	\draw[->] (aa) -- (ab);
	\draw[->] (ab) -- (ag);
	\draw[->] (ag) -- (ak1);
	\draw[->] (ak1) -- (dotsa1);
	\draw[->] (dotsa1) -- (a0);
	\draw[->] (a0) -- (F);
	
	\draw[->] (S) -- (b0);
	\draw[->] (b0) -- (dotsb1);
	\draw[->] (dotsb1) -- (bk1);
	\draw[->] (bk1) -- (bg);
	\draw[->] (bg) -- (ba);
	\draw[->] (ba) -- (bb);
	\draw (bb) edge[->,bend right=15] node[pos=0.5,below] {$T$} (F);
	
	\draw[<-] (ba) -- (aa);
	\draw[<-] (bb) -- (ab);
	\draw[->] (bg) -- (ag);

  \draw[decoration={brace,mirror},decorate] (a0.north) -- (ag.north) node[pos=0.5,above] {\(T\)};	
	\draw[decoration={brace},decorate] (bg.south) -- (b0.south) node[pos=0.5,below] {\(T\)};	
\end{tikzpicture}
\end{center}
\caption{A scheme of the schedule \(S_1\).}%
\label{fig:SchemeOverNode}%
\end{figure}
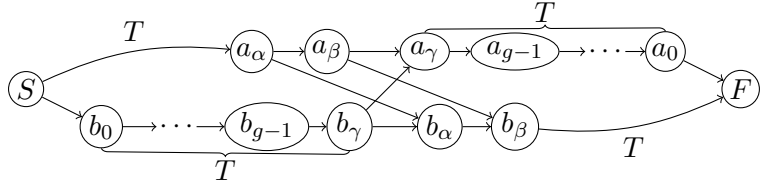
\begin{figure}%
\begin{center}
\begin{tikzpicture}
	\node[draw,circle,inner sep=1pt] (S) at (0,0) {\(S\)};
	\node[draw,circle,inner sep=1pt] (F) at (10.5,0) {\(F\)};
	\node[draw,circle,inner sep=1pt] (a0) at (9.5,0.5) {\(a_0\)};
	\node[inner sep=1pt] (dotsa1) at (8.7,0.5) {\(\cdots\)};
	\node[draw,ellipse,inner sep=1pt] (ak1) at (7.5,0.5) {\(a_{g-1}\)};
	\node[draw,ellipse,inner sep=1pt] (ag) at (5,0.5) {\(a_{\gamma}\)};
	
	\node[draw,circle,inner sep=1pt] (aa) at (6.2,0.5) {\(a_{\alpha}\)};
	\node[draw,circle,inner sep=1pt] (ab) at (4,0.5) {\(a_{\beta}\)};
	
	\node[draw,circle,inner sep=1pt] (ba) at (5.5,-0.5) {\(b_{\alpha}\)};
	\node[draw,circle,inner sep=1pt] (bb) at (6.5,-0.5) {\(b_{\beta}\)};
	
	\node[draw,ellipse,inner sep=1pt] (bg) at (4.3,-0.5) {\(b_{\gamma}\)};
	\node[draw,ellipse,inner sep=1pt] (bk1) at (3.2,-0.5) {\(b_{g-1}\)};
	\node[inner sep=1pt] (dotsb1) at (2,-0.5) {\(\cdots\)};
	\node[draw,circle,inner sep=1pt] (b0) at (1,-0.5) {\(b_0\)};
	\draw (S) edge[->,bend left=15] node[pos=0.5,above] {$T$} (ab);
	\draw[->] (ab) -- (ag);
	\draw[->] (ag) -- (aa);
	\draw[->] (aa) -- (ak1);
	\draw[->] (ak1) -- (dotsa1);
	\draw[->] (dotsa1) -- (a0);
	\draw[->] (a0) -- (F);
	
	\draw[->] (S) -- (b0);
	\draw[->] (b0) -- (dotsb1);
	\draw[->] (dotsb1) -- (bk1);
	\draw[->] (bk1) -- (bg);
	\draw[->] (bg) -- (ba);
	\draw[->] (ba) -- (bb);
	\draw (bb) edge[->,bend right=15] node[pos=0.5,below] {$T$} (F);
	
	\draw[->] (ba) -- (aa);
	\draw[<-] (bb) -- (ab);
	\draw[->] (bg) -- (ag);

  \draw[decoration={brace,mirror},decorate] (a0.north) -- (aa.north) node[pos=0.5,above] {\(T\)};	
	\draw[decoration={brace},decorate] (bg.south) -- (b0.south) node[pos=0.5,below] {\(T\)};	
\end{tikzpicture}
\end{center}
\caption{A scheme of the schedule \(S_2\)}%
\label{fig:SchemeOverNode2}%
\end{figure}

By the reasoning similar to that of the proof of Lemma \ref{lem:overedge}, using (\ref{eq:assumption}) we have
\[R_{\max}(S_1)=\max\left\{\ell_1+T^*,\ell_2+T^*,\sum_{j=0}^{g-1}d_j+d_{\gamma}+T^*,T^*+a_{\alpha}+a_{\beta}+b_{\beta}\right\}.\]

We prove that at least one of $S_1$ and $S_2$ is normal. Assume otherwise.
Then \(R_{\max}(S_1)>\bar R\) together with (\ref{eq:SLB}) and (\ref{eq:noncritical}) imply 
\begin{equation}
R_{\max}(S_1)=T^*+a_{\alpha}+a_{\beta}+b_{\beta},
\label{eq:R1}
\end{equation}
and \(R_{\max}(S_2)>\bar R\) implies 
\[R_{\max}(S_2)=T^*+\sum_{j=0}^{g-1}d_j+b_{\gamma}+\max\{a_{\gamma},b_{\alpha}\}+a_{\alpha}.\]
By the assumption \eqref{eq:assumption} we have
\begin{equation}
R_{\max}(S_2)\leqslant T^*+\sum_{j=0}^{g-1}d_j+b_{\gamma}+a_{\gamma}+b_{\alpha}.
\label{eq:R2}
\end{equation}
Therefore, by \eqref{eq:totalload}
\[R_{\max}(S_1)+R_{\max}(S_2)=2T^*+\sum_{j=0}^{g-1}d_j+d_{\alpha}+d_{\beta}+d_{\gamma}=2T^*+\Delta\leqslant 2\bar R,\]
hence both $S_1$ and $S_2$ are normal. Lemma is proved by contradiction. \end{proof}

Now we are ready to declare the main result of this Section.

\begin{theorem}\label{theo:scr0}Let \(I\) be an instance of the \(RO2|G=tree|R_{\max}\) problem, $\tilde I$ is obtained from $I$ by {\tt Reduction} procedure and one of the following conditions is true:
\begin{enumerate}
	\item \(G(\tilde I)=K_1\),
	\item \(G(\tilde I)\) contains an overloaded edge,
	\item \(G(\tilde I)\) contains an superoverloaded node.
\end{enumerate}
Then one can build a normal schedule \(S\) for \(I\) in linear time.
\end{theorem}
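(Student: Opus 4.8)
The plan is to reduce Theorem \ref{theo:scr0} to the lemmas already established, treating each of the three conditions as a separate case that either lands in a solved situation directly or matches the hypotheses of Lemma \ref{lem:overedge} or Lemma \ref{lem:superovernode}. The unifying observation, stated right before Lemma \ref{lem:reduction}, is that the procedure {\tt Reduction} preserves the standard lower bound (\(\bar R(\tilde I)=\bar R(I)\)) and is reversible: any normal schedule for \(\tilde I\) transforms into a normal (hence optimal) schedule for \(I\) in linear time by undoing the aggregations and edge contractions. So it suffices to show that under each of the three conditions the reduced instance \(\tilde I\) is normal and a normal schedule for it can be built in \(O(n)\).

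\textbf{Case 1.} When \(G(\tilde I)=K_1\), all jobs sit at the depot and travel times are irrelevant, so \(\tilde I\) is an instance of the classical \(O2||C_{\max}\) problem. By Lemma \ref{lem:GoSa} the Gonzalez-Sahni algorithm produces a normal schedule for \(\tilde I\) in linear time. This is exactly the situation already discussed in the paragraph following Lemma \ref{lem:reduction}.

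\textbf{Case 2.} If \(G(\tilde I)\) contains an overloaded edge, then by Lemma \ref{lem:reduction} we are in its outcome~3: \(G(\tilde I)\) is a chain \((v_0,\dots,v_g)\) with a single job at every node and the terminal edge \([v_{g-1},v_g]\) overloaded. This is precisely the hypothesis of Lemma \ref{lem:overedge}, which yields a normal schedule in linear time.

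\textbf{Case 3.} If \(G(\tilde I)\) contains a superoverloaded node, I would first recall that a superoverloaded node is overloaded (the remark after Definition \ref{def:overnode}), so by Lemma \ref{lem:reduction} we are in outcome~2: \(G(\tilde I)\) is a chain \((v_0,\dots,v_g)\) with one job per node except at the overloaded node, which carries two or three jobs. The key point I would need to verify is that a superoverloaded node yields \emph{exactly three} jobs after {\tt Reduction}, matching the hypothesis of Lemma \ref{lem:superovernode}. This is where the bulk of the work lies: Step~3 of {\tt Reduction} applies {\tt Partition 2.0} to the overloaded node and aggregates each nonempty resulting set, and I would argue that for a superoverloaded node the partition is irreducible (using Definition \ref{def:superoverloaded} together with the validity checks), so all three sets \({\cal J}_1,{\cal J}_2,{\cal J}_3\) are nonempty and no final pairwise aggregation (Step 3.3) is valid. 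Consequently three jobs remain at \(v_g\), and Lemma \ref{lem:superovernode} applies to build a normal schedule in linear time. In each case the linear-time reversal of {\tt Reduction} completes the construction for \(I\).

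\textbf{The main obstacle} I anticipate is Case~3: one must rule out that {\tt Partition 2.0} produces only two nonempty sets or that the optional aggregation in Step 3.3 collapses the overloaded node to two jobs, since Lemma \ref{lem:superovernode} is stated specifically for three jobs. Establishing that the superoverloaded condition forces three surviving jobs (i.e.\ that conditions 1 and 2 of Definition \ref{def:superoverloaded} prevent any pairwise aggregation at \(v_g\) from being valid) is the crux, and I would lean on inequality (\ref{eq:overR}) characterizing when an aggregation grows \(\bar R\) to show each pairwise merge is invalid.
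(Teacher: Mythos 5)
Your proposal is correct and follows essentially the same route as the paper: the paper's entire proof is a one-line citation of Lemmas \ref{lem:reduction}, \ref{lem:overedge} and \ref{lem:superovernode}, and your Cases 1--3 spell out exactly that reduction (with Lemma \ref{lem:GoSa} handling \(G(\tilde I)=K_1\)).

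One remark on what you call the crux: the obstacle you anticipate in Case 3 does not actually arise, because the hypothesis of the theorem is a property of the \emph{reduced} instance \(\tilde I\), not of the procedure's dynamics. You therefore do not need to argue that {\tt Partition 2.0} produces an irreducible partition or that Step 3.3 performs no merge; you only need that a node that \emph{is} superoverloaded in \(\tilde I\) carries exactly three jobs. That is immediate: by Definition \ref{def:superoverloaded} no part \({\cal J}_k\) of the partition can be empty (if \({\cal J}_k=\emptyset\), then for \(l\ne k\) condition 2 gives \(\sum_{J_j\in{\cal J}_l}d_j=\sum_{J_j\in{\cal J}_k\cup{\cal J}_l}d_j>\bar R-2\dist(v_0,v)\), contradicting condition 1), so a superoverloaded node has at least three jobs, while Lemma \ref{lem:reduction} caps the count at three. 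Since \(\tilde I\) is irreducible by construction of {\tt Reduction}, the hypotheses of Lemma \ref{lem:superovernode} --- which asks only for three jobs at the overloaded end of the chain plus irreducibility, not superoverloadedness per se --- are met. The argument you sketch (propagating superoverloadedness through {\tt Partition 2.0} and ruling out the Step 3.3 merge via \eqref{eq:overR}) is the right tool for a different statement, namely the sufficient condition on the \emph{initial} instance handled in Theorem \ref{theo:scr}, condition 4, via Theorem \ref{theo:superoverloaded}; for Theorem \ref{theo:scr0} as stated it is unnecessary.
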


\begin{proof} Straightforward from Lemmas \ref{lem:reduction}, \ref{lem:overedge} and \ref{lem:superovernode}.\end{proof}

Theorem \ref{theo:scr0} can be seen as a description of efficiently normal class of instances of \(RO2|G=tree|R_{\max}\), formulated in terms of the outcome of the Procedure {\tt Reduction}. In the next Section we suggest a formulation of sufficient conditions of efficient normality in terms of the properties of the initial instance (Theorem \ref{theo:scr}).

\section{Sufficient conditions of polynomial solvability}\label{sec:conditions}

Consider an instance $I$ of the $RO2|G=tree|R_{\max}$ problem. Let us introduce some notation and definitions convenient for the description of the further results.

\begin{definition}\label{def:subtree} Let $G'=\langle V';E'\rangle$ be a subtree of $G$. We define the \emph{weight} of $G'$ as
\[W(G')\doteq \sum_{v\in V'}\Delta(v)+4\sum_{e\in E'}\tau(e).\]
\end{definition}

It is easy to observe that $W(G)$ is preserved by any operation of job aggregation and terminal edge contraction (and therefore by the Procedure {\tt Reduction}). Moreover, if during the {\tt Reduction} some subtree $G'$ is completely contracted into a node $v$, then in the instance $\tilde I$ obtained we have $\Delta(\tilde I;v)=W(G')$.

\begin{notation}
\begin{itemize}
	\item Let $v\ne v_0$ and $e=[u,v]\in E$ --- the edge, incident to $v$ in the chain connecting $v_0$ and $v$. Then $G_v$ is the connected component of $G\setminus\{e\}$ containing $v$. In other words, $G_v$ is a subtree of $G$ induced by the set of all nodes $u$ such that $v$ belongs to a chain connecting  $v_0$ and $u$. For the sake of completeness let $G_{v_0}=G$.
	\item Let $e\in E$. Then by $v(e)$ we denote the node incident to $e$ such that $e$ belongs to a chain connecting $v_0$ and $v(e)$.
	\item $B_{G'}(v)$ --- the set of all nodes of $G'$, adjacent to $v$.
\end{itemize}
\end{notation}

\begin{proposition}\label{prop:noovernode}
Let $W(G_v)\leqslant\bar R-2\dist(v_0,v)$. Then during the {\tt Reduction} procedure no node from $G_v$ becomes overloaded.
\end{proposition}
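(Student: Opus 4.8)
The plan is to track the load of an arbitrary node $w\in G_v$ throughout the execution of {\tt Reduction} and show it never exceeds the overload threshold $\bar R-2\dist(v_0,w)$. The key structural observation, stated just before the proposition, is that $W(\cdot)$ is invariant under both job aggregation and terminal edge contraction; moreover, whenever a subtree $G'$ is entirely contracted into a node $w$, the resulting load equals $\Delta(\tilde I;w)=W(G')$. So the whole argument should be reducible to a statement purely about weights $W(\cdot)$ of subtrees, which are computable on the \emph{original} instance $I$ and do not change during the procedure.

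First I would fix a node $w$ lying in $G_v$ and consider any intermediate instance arising during {\tt Reduction}. The load $\Delta(w)$ at that moment is accumulated from the jobs originally at $w$ together with everything contracted into $w$ from the subtree ``beyond'' $w$ (i.e.\ nodes $u$ for which $w$ separates $u$ from $v_0$). Since contractions only move jobs toward the depot along the chain and {\tt Reduction} contracts terminal edges, the set contracted into $w$ at any stage is contained in $G_w$; hence at every moment $\Delta(w)\leqslant W(G_w)$, with equality only if all of $G_w$ has collapsed into $w$. The clean way to phrase this is: for any node $w$ encountered during the procedure, its current load is at most $W(G_w)$ computed in the original instance.

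Next I would invoke the hypothesis. For $w\in G_v$ we have $G_w\subseteq G_v$, and because all job lengths and edge weights are non-negative, $W(G_w)\leqslant W(G_v)$. Combining this with the monotonicity of $\dist$ along the chain — namely $\dist(v_0,w)\geqslant\dist(v_0,v)$ for $w\in G_v$, which follows since the unique chain from $v_0$ to $w$ passes through $v$ — I obtain
\[
\Delta(w)\leqslant W(G_w)\leqslant W(G_v)\leqslant\bar R-2\dist(v_0,v)\leqslant\bar R-2\dist(v_0,w),
\]
using the hypothesis $W(G_v)\leqslant\bar R-2\dist(v_0,v)$ in the penultimate step. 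Since $\bar R$ is preserved by {\tt Reduction} (Lemma \ref{lem:reduction}), the right-hand threshold is the current overload threshold for $w$, so by Definition \ref{def:overnode} the node $w$ is underloaded at every stage. As $w\in G_v$ was arbitrary, no node of $G_v$ ever becomes overloaded.

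The step I expect to be the main obstacle is the first inequality $\Delta(w)\leqslant W(G_w)$ at an arbitrary intermediate stage: one must argue carefully that contractions and aggregations performed \emph{inside} $G_w$ can only pour weight into $w$ from within $G_w$ and never import weight from outside $G_w$. This requires confirming that {\tt Reduction} processes terminal edges in a way that preserves the chain direction toward $v_0$, so the ``descendant'' subtree structure $G_w$ relative to the original tree is respected throughout; once that containment $G_w\subseteq G_v$ at the subtree level is nailed down, the weight inequalities and the distance monotonicity are routine consequences of non-negativity of $\tau$ and $\Delta$.
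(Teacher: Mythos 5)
Your proposal is correct and matches the paper's argument in essence: the paper likewise reduces everything to the $W$-invariance noted before the proposition and the same inequality chain, phrased contrapositively — if $u\in G_v$ were overloaded then $\Delta(u)>\bar R-2\dist(v_0,u)\geqslant\bar R-2\dist(v_0,v)\geqslant W(G_v)$, contradicting Definition \ref{def:subtree}. The only difference is that the paper bounds $\Delta(u)\leqslant W(G_v)$ directly (the load of any node of $G_v$ is a non-negative summand of $W(G_v)$), so your intermediate bound $\Delta(w)\leqslant W(G_w)$ — the step you flagged as the main obstacle — is not actually needed.
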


\begin{proof} It is sufficient to prove that any tree $G_v$ with such a property cannot contain an overloaded node. Assume otherwise, let some $u\in G_v$ be overloaded. Then by Definition \ref{def:overnode}
\[\Delta(u)>\bar R-2\dist(v_0,u)\geqslant \bar R-2\dist(v_0,v)\geqslant W(G_v),\]
which contradicts with Definition \ref{def:subtree}.
\end{proof}

The next Theorem describes an efficiently normal class of instances of $RO2|G=tree|R_{\max}$.

\begin{theorem}\label{theo:scr} Suppose an instance $I$ of the $RO2|G=tree|R_{\max}$ problem satisfies at least one of the following properties:
\begin{enumerate}
  \item The depot $v_0$ is overloaded.
	\item $\forall v\in B_{G(I)}(v_0)$ $W(v)\leqslant \bar R-2\tau([v_0,v])$.
	\item There exists $e\in E$ such that 
\begin{equation}\label{eq:overedgeguarantee}W\left(G_{v(e)}\right)\in\bigl(\bar R-2\dist(v_0,v)-2\tau(e),\bar R-2\dist(v_0,v)\bigr].\end{equation}
	\item There exists $v\ne v_0$ such that
	\begin{enumerate}
		\item $\forall u\in B_{G_v}(v)$ $W(G_u)\leqslant\bar R-2\dist(v_0,u)$, and
		\item $W(G_v)>\frac32\bigl(\bar R-2\dist(v_0,v)\bigr)+M$, there \[M=\max\left\{d_{\max}(v),\max_{u\in B_{G_v}(v)}\bigl(W(G_u)+4\tau([v,u])\bigr)\right\}.\]
	\end{enumerate}
\end{enumerate}
Then a normal schedule for $I$ can be built in linear time.
\end{theorem}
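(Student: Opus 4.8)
The plan is to reduce Theorem~\ref{theo:scr} to Theorem~\ref{theo:scr0} by showing that each of the four conditions on the \emph{initial} instance $I$ forces the corresponding structural outcome of the \texttt{Reduction} procedure, so that $\tilde I$ satisfies one of the three hypotheses of Theorem~\ref{theo:scr0}. The key observation, already recorded in the remark after Definition~\ref{def:subtree}, is that $W$ is an invariant of job aggregation and terminal edge contraction: whenever a whole subtree $G'$ gets contracted into a single node $v$, the resulting load satisfies $\Delta(\tilde I;v)=W(G')$. This lets me read off the load of contracted nodes purely from the weights $W(G_v)$ computed in the original instance, without tracking the intermediate states of the procedure.

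First I would handle Condition~1. If $v_0$ is overloaded in $I$, then since $\overnodes{I}\leqslant1$ (Proposition~\ref{prop:overloaded}) no other node is overloaded, every terminal edge not at the depot can be contracted, and the whole tree collapses onto $v_0$; thus $G(\tilde I)=K_1$ and case~1 of Theorem~\ref{theo:scr0} applies. For Condition~2, I would invoke Proposition~\ref{prop:noovernode}: the hypothesis $W(G_v)\leqslant\bar R-2\tau([v_0,v])$ for every $v$ adjacent to $v_0$ guarantees that none of the branches hanging off the depot ever produces an overloaded node during \texttt{Reduction}, so again every branch contracts into $v_0$ and $G(\tilde I)=K_1$. (Here I would note that $\dist(v_0,v)=\tau([v_0,v])$ for a neighbour $v$, reconciling the two forms of the bound.) Conditions~1 and~2 are therefore the routine ``collapse to $K_1$'' cases.

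For Condition~3, the plan is to show that the edge $e$ remains present in $G(\tilde I)$ and becomes overloaded. The interval hypothesis~\eqref{eq:overedgeguarantee} has two halves: the upper bound $W(G_{v(e)})\leqslant\bar R-2\dist(v_0,v)$ ensures, via Proposition~\ref{prop:noovernode}, that $G_{v(e)}$ contracts entirely into the node $v=v(e)$ with final load $d_j=\Delta(\tilde I;v)=W(G_{v(e)})-4\tau(e)$ after $e$ itself is (virtually) accounted for; the lower bound then forces $d_j+4\tau(e)=W(G_{v(e)})>\bar R-2\dist(v_0,v)-2\tau(e)+4\tau(e)$, which I would reorganise into the overloaded-edge inequality~\eqref{eq:overedge}. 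So $e$ is overloaded in $\tilde I$ and case~2 (respectively case~3 of Lemma~\ref{lem:reduction}) applies. The delicate point here is bookkeeping the factor-of-four versus factor-of-two contributions of $\tau(e)$ correctly, matching Definition~\ref{def:overedge}; this is where I expect the main obstacle to lie.

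Finally, for Condition~4, I would combine both ideas: part~(a) (via Proposition~\ref{prop:noovernode} applied to each $G_u$ for $u\in B_{G_v}(v)$) guarantees that all subtrees strictly below $v$ collapse onto $v$, giving $\Delta(\tilde I;v)=W(G_v)$; part~(b) is engineered so that, after this collapse, $v$ satisfies precisely the hypothesis $\Delta(v)>\frac32(\bar R-2\dist(v_0,v))+d_{\max}(v)$ of Theorem~\ref{theo:superoverloaded}, making $v$ superoverloaded. The subtlety is that $d_{\max}(\tilde I;v)$ is taken over the \emph{contracted} jobs, so the quantity $M$ in part~(b) must dominate both the original $d_{\max}(v)$ and the lengths $W(G_u)+4\tau([v,u])$ of jobs arriving from the child subtrees; this is exactly why $M$ is defined as that maximum. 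Once $v$ is superoverloaded, Procedure \texttt{Partition~2.0} (shown irreducible under the Theorem~\ref{theo:superoverloaded} bound) yields the two-or-three-job overloaded node of case~3 of Theorem~\ref{theo:scr0}, and Lemma~\ref{lem:superovernode} finishes. I would close by remarking that all structural determinations use only quantities precomputable in $O(n)$ time, and the schedule construction is linear by Theorem~\ref{theo:scr0}, so the total running time is linear.
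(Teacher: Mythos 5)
Your overall plan---reduce to Theorem~\ref{theo:scr0} via the invariance of $W$ under \texttt{Reduction} and via Proposition~\ref{prop:noovernode}---is exactly the paper's strategy, but two of your steps fail. First, in Conditions~2 and~4 you infer from Proposition~\ref{prop:noovernode} that the relevant subtrees contract completely (``every branch contracts into $v_0$ and $G(\tilde I)=K_1$''; ``all subtrees strictly below $v$ collapse onto $v$''). Proposition~\ref{prop:noovernode} only excludes overloaded \emph{nodes}; the \texttt{Reduction} procedure also stops contracting at an overloaded \emph{edge}, and under these hypotheses such an edge genuinely can appear. Under Condition~2, once $G_v$ has collapsed into $v$ the aggregated job has length $d_j=W(G_v)\leqslant\bar R-2\tau([v_0,v])$, and the contraction test for $e=[v_0,v]$ compares $d_j+4\tau(e)\leqslant\bar R+2\tau(e)$ against the threshold $\bar R-2\dist(v_0,v_0)=\bar R$, so $e$ may well be overloaded and the outcome is then a two-node chain with an overloaded edge, not $K_1$. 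Similarly under Condition~4, item~4.1 permits $W(G_u)+4\tau([v,u])$ as large as $\bar R-2\dist(v_0,v)+2\tau([v,u])$, so the edge $[v,u]$ can be overloaded and block the collapse onto $v$. The paper handles both cases with an explicit dichotomy: either an overloaded edge occurs, which is outcome~3 of Lemma~\ref{lem:reduction} and lands in condition~2 of Theorem~\ref{theo:scr0} (Lemma~\ref{lem:overedge}), or full contraction happens and the argument proceeds as you describe. The theorem's conclusion survives, but your intermediate claims are false as stated and the overloaded-edge branch must be added.

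Second, your bookkeeping in Condition~3 is not merely delicate but wrong. By definition $G_{v(e)}$ is a connected component of $G\setminus\{e\}$, so $W(G_{v(e)})$ does \emph{not} include the term $4\tau(e)$, and after $G_{v(e)}$ collapses the aggregated job at $v=v(e)$ has length $d_x=W(G_{v(e)})$, not $W(G_{v(e)})-4\tau(e)$. (The upper bound in~\eqref{eq:overedgeguarantee} is what makes $v(e)$ underloaded, so this aggregation into a single job is valid.) With your value of $d_x$, the lower bound of~\eqref{eq:overedgeguarantee} yields only $d_x+4\tau(e)>\bar R-2\dist(v_0,v)-2\tau(e)$, which misses the requirement of Definition~\ref{def:overedge}, namely $d_x+4\tau(e)>\bar R-2\dist(v_0,u)=\bar R-2\dist(v_0,v)+2\tau(e)$, by a margin of $4\tau(e)$. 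The correct computation, as in the paper, is $d_x=W\left(G_{v(e)}\right)>\bar R-2\dist(v_0,v)-2\tau(e)=\bar R-2\dist(v_0,u)-4\tau(e)$, which is precisely the overloaded-edge inequality. So the factor-of-two/factor-of-four trap you flagged is exactly the one you fell into, and as written the verification that $e$ becomes overloaded does not go through.
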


\begin{proof}It is sufficient to prove that such an instance $I$ satisfy the conditions of Theorem \ref{theo:scr0}.
\begin{enumerate}
  \item By Lemma \ref{lem:reduction} the only possible outcome of the {\tt Reduction} procedure is that $G$ is contracted into  $v_0$, and we have condition 1 of Theorem \ref{theo:scr0}.
	\item Let us prove that in this case the initial tree is contracted by the {\tt Reduction} procedure either into $v_0$ or into a chain containing an overloaded edge. By Proposition \ref{prop:noovernode}, in this case each of the subtrees $G_v$, $v\in B_G(v_0)$ is either contracted into $v$ or there occurs an overloaded edge. In the later case we have condition 2  of Theorem \ref{theo:scr0}, otherwise each of $G_v$ is contracted into $v$, and after that $v$ is underloaded. Any further reduction can only end up either with $v_0$ or a link with overloaded edge (if any of the edges incident to $v_0$ become overloaded). Either way we have one of the conditions 1, 2 of Theorem \ref{theo:scr0}.
	\item Let us prove that condition \eqref{eq:overedgeguarantee} guarantees that the {\tt Reduction} procedure will end up with a chain containing an overloaded edge. Indeed, by \eqref{eq:overedgeguarantee} and Proposition \ref{prop:noovernode} the tree $G_{v(e)}$ is either contracted into $v(e)$ or an overloaded edge occurs during the process and we have the claim. In the first case we obtain $\Delta(v(e))=W(G_{v(e)})$, by \eqref{eq:overedgeguarantee} $v(e)$ is underloaded, and hence all the jobs from ${\cal J}(v(e))$ are aggregated into a single job $J_x$ of length $d_x=\Delta(v(e))$ during the {\tt Reduction}. 
	Let $e=[u,v]$. From \eqref{eq:overedgeguarantee} we have
	\[d_x>\bar R-2\dist(v_0,v)-2\tau(e)=\bar R-2\dist(v_0,u)-4\tau(e),\]
	and by Definition \ref{def:overedge} the edge $e$ is overloaded.
	\item Suppose that no overloaded edge occurs during the {\tt Reduction} of subtree $v$. By condition 4.1 and Proposition \ref{prop:noovernode} $G_v$ is contracted into overloaded node $v$. The value $M$ equals the maximal job length in ${\cal J}(v)$ after the reduction prior to Step 3 of the {\tt Reduction} procedure: indeed, for each $u\in B_{G_v}(v)$ all jobs from $G_u$ are reduced into a single job of length $W(G_u)$, which is further translated into $v$ while its length becomes $W(G_u)+4\tau([v,u])$ by the contraction of edge $[v,u]$. Now by condition 4.2 set of jobs from ${\cal J}(v)$ satisfy the Theorem \ref{theo:superoverloaded}, an as soon as Step 3 of the {\tt Reduction} procedure is an application of the Procedure {\tt Partition 2.0} (see Section \ref{sec:prelim}), all the jobs from ${\cal J}(v)$ are aggregated into exactly three jobs, and $v$ is superoverloaded. Hence the reduced instance satisfies condition 3 of Theorem \ref{theo:scr0}. 
\end{enumerate}
\end{proof}


\section{Conclusion}\label{sec:conclusion}
We described the instance reduction procedure, and proved that any instance with $G=tree$ can reduced to a chain preserving the standard lower bound. We may distinguish four outcomes of that procedure: Lemma \ref{lem:reduction} describes three, but the second one actually consists of two: with 2 and 3 jobs at the overloaded node. We cannot guarantee the normality of the initial instance only in one of that four outcomes (with 2 jobs in the overloaded node). However it would be of interest to find out, how often does such an abnormal outcome occur, and therefore, how justified the use of term ``normal'' in this context is. This might be a subject for an experimental study. 

On top of that we suggest the following directions for future investigation.




{\bfseries Problem 1.} Find new normal (or efficiently normal) classes of instances of the $RO2|G=cycle|R_{\max}$ problem.

{\bfseries Problem 2.} Determine whether $RO2|G=tree|R_{\max}$ is strongly NP-hard. 

\section*{Funding}

This research was supported by the program of fundamental scientific researches of the SB RAS No I.5.1., project No 0314-2019-0014, and by the Russian Foundation for Basic Research, projects 17-01-00170, 17-07-00513 and 18-01-00747.

\bibliographystyle{tfs}
\bibliography{Chernykh}

\begin{thebibliography}{10}
\providecommand{\MR}{\relax\unskip\space MR }
\providecommand{\url}[1]{\normalfont{#1}}
\providecommand{\urlprefix}{Available at }

\bibitem{AvBe96}
I. Averbakh and O. Berman, \emph{Routing two-machine flowshop problems on
  networks with special structure}, Transportation Science 30 (1996), pp.
  303--314.

\bibitem{AvBe99}
I. Averbakh and O. Berman, \emph{A simple heuristic for {$m$}-machine flow-shop
  and its applications in routing-scheduling problems}, Operations Research 47
  (1999), pp. 165--170.

\bibitem{AvBeCh2}
I. Averbakh, O. Berman, and I. Chernykh, \emph{A 6/5-approximation algorithm
  for the two-machine routing open-shop problem on a two-node network},
  European Journal of Operational Research 166 (2005), pp. 3--24.

\bibitem{AvBeCh1}
I. Averbakh, O. Berman, and I. Chernykh, \emph{The routing open-shop problem on
  a network: Complexity and approximation}, European Journal of Operational
  Research 173 (2006), pp. 531--539.

\bibitem{BrKn}
P. Brucker, S. Knust, T. {Edwin Cheng}, and N. Shakhlevich, \emph{{Complexity
  Results for Flow-Shop and Open-Shop Scheduling Problems with Transportation
  Delays}}, Annals of Operations Research  (2004), pp. 81--106.

\bibitem{Ch}
I. Chernykh, \emph{Routing open shop with unrelated travel times}, in
  \emph{Discrete Optimization and Operations Research --- 9th International
  Conference, {DOOR} 2016, Vladivostok, Russia, September 19-23, 2016,
  Proceedings}. 2016, pp. 272--283.

\bibitem{ChLgot}
I. Chernykh and E. Lgotina, \emph{The 2-machine routing open shop on a
  triangular transportation network}, in \emph{Discrete Optimization and
  Operations Research --- 9th International Conference, {DOOR} 2016,
  Vladivostok, Russia, September 19-23, 2016, Proceedings}. 2016, pp. 284--297.

\bibitem{ChPya}
I. Chernykh and A. Pyatkin, \emph{Refinement of the optima localization for the
  two-machine routing open shop}, in \emph{Proceedings of the 8th International
  Conference on Optimization and Applications (OPTIMA’17). Vol. 1987. CEUR
  Workshop Proceedings (1987)}. 2017, pp. 131--138.

\bibitem{ChouLin07}
S. Chou and S. Lin, \emph{Museum visitor routing problem with the ballancing of
  concurrent visitors}, Complex Systems Concurrent Engineering 6 (2007), pp.
  345--353.

\bibitem{dW89}
D. de  Werra, \emph{Graph-theoretical models for preemptive scheduling},
  Advances in Project Scheduling  (1989), pp. 171--185.
  \urlprefix\url{http://infoscience.epfl.ch/record/88562}.

\bibitem{GolPya19}
M. Golovachev and A.V. Pyatkin, \emph{Routing open shop with two nodes, unit
  processing times and equal number of jobs and machines}, in
  \emph{Mathematical Optimization Theory and Operations Research, MOTOR 2019,
  Ekaterinburg, Russia, July 8--12, 2019, Proceedings}, M. Khachay, Y.
  Kochetov, and P. Pardalos, eds., Lecture Notes in Computer Science Vol.
  11548. 2019, pp. 264--276.

\bibitem{GoSa}
T.F. Gonzalez and S. Sahni, \emph{Open shop scheduling to minimize finish
  time}, J. {ACM} 23 (1976), pp. 665--679.

\bibitem{Kl93}
U. Kleinau, \emph{Two-machine shop scheduling problems with batch processing},
  Mathematical and Computer Modelling 17 (1993), pp. 55--66.

\bibitem{KoSeCh99}
A. Kononov, S. Sevastianov, and I. Tchernykh, \emph{When difference in machine
  loads leads to efficient scheduling in open shops}, Annals of Operations
  Research 92 (1999), pp. 211--239.

\bibitem{KoFPTAS}
A. Kononov, \emph{On the routing open shop problem with two machines on a
  two-vertex network}, Journal of Applied and Industrial Mathematics 6 (2012),
  pp. 318--331.

\bibitem{Kologm}
A. Kononov, \emph{O(log m)-approximation for the routing open shop problem},
  RAIRO -- Operations Research 49 (2015), pp. 383--391.

\bibitem{TheBible}
E.L. Lawler, J.K. Lenstra, A.H.G. {Rinnooy Kan}, and G.B. Shmoys,
  \emph{Sequencing and scheduling: algorithms and complexity. Logistics of
  Production and Inventory}, Elsevier, 1993.

\bibitem{LuSoStru}
I.N. Lushchakova, A.J. Soper, and V.A. Strusevich, \emph{Transporting jobs
  through a two-machine open shop}, Naval Research Logistics 56 (2009), pp.
  1--18.

\bibitem{PinSch82}
M. Pinedo and L. Schrage, \emph{Stochastic shop scheduling: a survey}, in
  \emph{Deterministic and stochastic scheduling}, M.{\relax A.H}. Dempster,
  J.K. Lenstra, and A.{\relax H.G}. {Rinnooy Kan}, eds., NATO Advanced Study
  Institute Series Vol.~84, Springer, Dordrecht,  1982, pp. 181--196.

\bibitem{Stru}
V. Strusevich, \emph{A heuristic for the two-machine open-shop scheduling
  problem with transportation times}, Discrete Applied Mathematics 93 (1999),
  pp. 287--304.

\bibitem{BevPyat16}
R. {van Bevern} and A.V. Pyatkin, \emph{Completing partial schedules for open
  shop with unit processing times and routing}, in \emph{Proceedings of the
  11th International Computer Science Symposium in Russia (CSR’16). Lecture
  Notes in Computer Science}, Vol. 9691. 2016, pp. 73--87.

\bibitem{BevPyatSev19}
R. {van Bevern}, A.V. Pyatkin, and S.V. Sevastyanov, \emph{An algorithm with
  parameterized complexity of constructing the optimal schedule for the routing
  open shop problem with unit execution times}, Siberian Electronic
  Mathematical Reports 16 (2019), pp. 42--84.

\bibitem{SSS}
D.P. Williamson, L.A. Hall, J.A. Hoogeveen, C.A.J. Hurkens, J.K. Lenstra, S.V.
  Sevast'janov, and D.B. Shmoys, \emph{Short shop schedules}, Operations
  Research 45 (1997), pp. 288--294.

\bibitem{YuLinChou10}
V.F. Yu, S. Lin, and S. Chou, \emph{The museum visitor routing problem},
  Applied Mathematics and Computation 216 (2010), pp. 719--729.

\end{thebibliography}

\end{document}